\documentclass[12pt,draftcls,onecolumn]{IEEEtran}
\usepackage{times}

\usepackage{etex}
\usepackage{multicol}
\setlength{\marginparwidth}{13mm}
\usepackage[disable]{todonotes}
\newcommand{\todoc}[2][]{\todo[color=red!20!white,#1]{Cs: #2}}

\usepackage{amsmath}
\usepackage{mathtools}
\usepackage{graphicx}
\usepackage{times}
\usepackage{helvet}
\usepackage{courier}
\usepackage{paralist}
\usepackage{latexsym}
\usepackage{url}
\usepackage[all]{xy}
\usepackage{amsmath}
\usepackage{amssymb}
\usepackage{amsthm}
\usepackage{nccmath} 
\usepackage{comment}
\usepackage{paralist}
\usepackage{xcolor}
\usepackage[colorlinks=true,linkcolor=blue,citecolor=purple]{hyperref}
\usepackage{graphicx}
\usepackage{pifont}
\usepackage{savesym}
\savesymbol{AND}
\usepackage{xspace}
\usepackage{tikz}
\usepackage{pgfplots}
\usepackage{pgf}
\usepackage{algorithm}
\usepackage{algorithmic}
\usepackage{xspace}
\usepackage{comment}
\usepackage{placeins}
\usepackage[capitalize]{cleveref}

\if0
\usetikzlibrary{intersections}
\usetikzlibrary{arrows,calc,fit,patterns,plotmarks,shapes.geometric,shapes.misc,shapes.symbols,   shapes.arrows,   shapes.callouts,   shapes.multipart,   shapes.gates.logic.US,   shapes.gates.logic.IEC,   er,   automata,   backgrounds,   chains,   topaths,   trees,   petri,   mindmap,   matrix,   calendar,   folding, fadings,   through,   positioning,   scopes,   decorations.fractals,   decorations.shapes,   decorations.text,   decorations.pathmorphing,   decorations.pathreplacing,   decorations.footprints,   decorations.markings, shadows,circuits}
\tikzstyle{decision}=[diamond,draw]
\tikzstyle{line}=[draw]
\tikzstyle{elli}=[draw,ellipse]
\tikzstyle{arrow} = [thick]
\fi
\newcommand{\ralp}{r_{\text{\sc ALP}}}
\newcommand{\Jalp}{J_{\text{\sc ALP}}}
\newcommand{\alp}{\text{\sc ALP}\xspace}
\newcommand{\lralp}{\text{\sc LRALP}\xspace}
\newcommand{\lralpshort}{\text{\sc LRA}\xspace}
\newcommand{\mb}{\mbox{ }}
\newcommand{\one}{\mathbf{1}}

\newcommand{\nn}{\nonumber}

\newcommand{\R}{\Re} 

\newcommand{\ra}{\rightarrow}

\newcommand{\F}{\mathcal{F}}

\newcommand{\N}{\mathcal{N}}

\newcommand{\etmn}{||\Gamma J^*-\hg J^*||_{\mn}}

\newcommand{\mn}{\infty,\psi}
\newcommand{\tj}{J_{\alp}} 
\newcommand{\hj}{J_{\lralpshort}} 
\newcommand{\Jlr}{\hj}

\newcommand{\hv}{V_{\lralpshort}} 

\newcommand{\Jalpo}{J^*_{\alp}}
\newcommand{\Jlro}{J^*_{\lralpshort}}
\newcommand{\rlr}{r_{\lralpshort}}

\newcommand{\tu}{\tilde{u}}
\newcommand{\hu}{\hat{u}}

\newcommand{\hr}{r_{\lralpshort}} 
\newcommand{\tr}{r_{\alp}} 
\DeclareMathOperator{\argmin}{argmin}
\DeclareMathOperator{\argmax}{argmax}
\newcommand{\norm}[1]{\|#1\|}

\newcommand{\eqdef}{\doteq}
\newcommand{\defeq}{\doteq}

\newcommand{\eps}{\varepsilon}
\renewcommand{\epsilon}{\varepsilon}

\newcommand{\hg}{\hat{\Gamma}}
\newtheorem{theorem}{Theorem}[section]
\newtheorem{lemma}[theorem]{Lemma}
\newtheorem{claim}[theorem]{Claim}
\newtheorem{proposition}[theorem]{Proposition}
\newtheorem{corollary}[theorem]{Corollary}
\newtheorem{assumption}{Assumption}[section]
\newtheorem{definition}{Definition}[section]
\newtheorem{remark}{Remark}[section]

\newtheorem{note}{Note}[section]

\def\Re{\mathbb{R}}

\def\S{\mathcal{S}}
\def\A{\mathcal{A}}

\newcommand{\us}[2]{\underset{#2}{#1}~}
\newcounter{subequation}[equation]

\def\mathdisplay#1{%
  \ifmmode \@badmath
  \else
    $$\def\@currenvir{#1}%
    \let\dspbrk@context\z@
    \let\tag\tag@in@display \SK@equationtrue 
    \global\let\df@label\@empty \global\let\df@tag\@empty
    \global\tag@false
    \let\mathdisplay@push\mathdisplay@@push
    \let\mathdisplay@pop\mathdisplay@@pop
    \if@fleqn
      \edef\restore@hfuzz{\hfuzz\the\hfuzz\relax}%
      \hfuzz\maxdimen
      \setbox\z@\hbox to\displaywidth\bgroup
        \let\split@warning\relax \restore@hfuzz
        \everymath\@emptytoks \m@th $\displaystyle
    \fi
}

\newcounter{algostep}

\newcounter{acalgorithm}

\usepackage[backend=bibtex,style=ieee-alphabetic,natbib=true,backref=false]{biblatex}
\addbibresource{ref.bib}
\DefineBibliographyStrings{english}{%
  backrefpage = {page},
  backrefpages = {pages},
}

\title{A Linearly Relaxed Approximate Linear Program for Markov Decision Processes}
\author{Chandrashekar Lakshminarayanan$^\star$, Shalabh Bhatnagar$^\star$,
 and Csaba Szepesvari$^\dagger$\thanks{$^\star$Department of Computer
Science and Automation, Indian Institute of Science, Bangalore 560012.
E-mail: $\{$chandrul, shalabh$\}$@csa.iisc.ernet.in}
\thanks{$^\dagger$Department of Computing Science, University of Alberta,
Edmonton, Alberta, Canada T6G 2E8. E-mail: csaba.szepesvari@ualberta.ca}}
%

\begin{document}
\maketitle

\begin{abstract}
Approximate linear programming (ALP) and its variants have been widely applied to Markov Decision Processes (MDPs) with a large number of states. A serious limitation of ALP is that it has an intractable number of constraints, as a result of which constraint approximations are of interest. In this paper, we define a linearly relaxed approximation linear program (LRALP) that has a tractable number of constraints, obtained as positive linear combinations of the original constraints of the ALP. The main contribution is a novel performance bound for LRALP.
\end{abstract}
\begin{keywords}{
Markov Decision Processes (MDPs), Approximate Linear Programming (ALP), 
}
\end{keywords}
\section{Introduction}
Markov decision processes (MDPs) have proved to be an indispensable model for sequential decision making under uncertainty with applications in networking, traffic control, robotics, operations research, business, finance, artificial intelligence, health-care and more (see, e.g., \cite{
White93:Apps,
rust96:book,
FeiSh02:MDPHandbook,
QiWu07,
SiBu10:MDPinAI,
BauRie:11,Puter,
LeLiu12:RLBook,
Abuetal15:MDPWireless,
BouDi17:MDPPractice}).
In this paper we adopt the framework of discrete-time, discounted MDPs when
a controller steers the stochastically evolving state of a system while receiving
rewards that depends on the states visited and actions chosen. The goal is to choose the actions so as to maximize the \emph{return}, defined as the total discounted expected reward. A controller that uses past state information is called a \emph{policy}. An \emph{optimal policy} is one that maximizes the value no matter where the process is started from \cite{Puter}.
In this paper we consider planning problems where the goal is to calculate actions of policies that give rise to high values
and give new error bounds on the quality of solutions obtained by solving linear programs of tractable size.
To explain the contributions in more details, we start by describing the computational challenges involved in planning.

The main objective of \emph{planning} is to compute actions of an optimal policy while interacting with an MDP model.
In finite state-action MDPs,
assuming access to individual transition probabilities and rewards along transitions,
various algorithms are available to perform this computation in time and space that scales polynomially with the number of states and actions.
However, in most practical applications, the MDP is compactly represented
and if it is not infinite, the number of states scale \emph{exponentially} with the \emph{size of the representation} of the MDP.
If planners are allowed to perform some fixed amount of calculations for each state encountered,
it is possible to use sampling to make the per-state calculation-cost
independent of the size of the state space \cite{rust96:randomization,szepesvari2001,kearns2002sparse}.
\todoc{Maybe we can cite Mausam's book here as discussing a whole range of methods originating from AI
that target this problem, although without theoretical guarantees.}
Nevertheless, the resulting methods are still quite limited.
In fact, various hardness results show that computing actions of (near-) optimal policies is intractable in various senses
and in various compactly represented MDPs \cite{BlonTsi:00Complexity}.
Given these negative results,
it is customary to adopt the \emph{modest goal of efficiently computing actions of a policy that
is nearly as good as a policy chosen by a suitable (computationally unbounded, and well-informed) oracle
from a given restricted policy class}. \todoc{We should probably elaborate on the oracle idea space permitting.
Or in the future. Btw, the situation is similar to statistical learning theory: There competing with the best choice
is only information theoretically possible. And competing with best choice
in hindsight is often computationally intractable. Maybe cite information theory result for competing with the best policy in a class?}
Here, within some restrictions (see below), the policy class can be chosen by the user.
The more flexibility the user is given in this choice, the stronger a planning method is. 
The problem of planning with limited resources is also one of the key problem in
artificial intelligence (AI). 
The book of  \cite{kolobov2012planning} gives a relatively fresh, algorithm-centered 
summary of existing methods suitable for planning in MDPs. 
AI research tend to focus on empirical results through the development of various benchmarks
and little if any effort is devoted to the theoretical understanding of the quality-effort tradeoff exhibited by the
that the various algorithms that are developed in this field.

A popular approach along these lines, which goes back to \citet{SchSei85},
relies on considering linear approximations to the \emph{optimal value function}:
The idea is that, similarly to linear regression, a fixed sequence of basis functions are combined
linearly. The user's task is to use a priori knowledge of the MDP
to choose the basis functions so that
a good approximation to the optimal value function will exist in the linear space spanned by the basis functions.
The idea then is to design some algorithm to find the coefficients of the basis functions that gives a good approximation,
while keeping computation cost in check.
Finding a good approximation is sufficient, since at the expense of an extra $O(1/\epsilon^2)$ randomized computation,
a uniform $O(\epsilon)$-approximation
to the optimal value function can be used to calculate an action of an $O(\epsilon)$-optimal policy at any given state
(e.g., follow the ideas in \cite{szepesvari2001,kearns2002sparse}; see also Theorem 3.7 of \citet{Kall17}).
Since the number of coefficients can be much smaller than the number of states, the algorithms that search
for the coefficients have the potential to run efficiently regardless of the number of states.

Following \citet{SchSei85}, most of the literature considers
algorithms that are obtained from restricting exact planning methods to search
in the span of the fixed basis functions when performing computations.
In this paper we consider the so-called \emph{approximate linear programming} (ALP) approach, which
was heavily studied during the last two decades, e.g.,
\cite{
schuurmans,
gkp,
ALP,
CS,
kveton2004heuristic,
petrik,
SALP,
fs,
npalp,
BhatFaMo12:SALPNP,
abbbama14:dualLP}.
The basic idea here is to combine a linear program whose solution is the optimal value function (and thus the number of optimization variables in it scales with the number of states) with a linear constraint that restricts the optimization variables to lie in the subspace spanned by the basis functions. As already noted by \citet{SchSei85}, the new LP can still be kept feasible by just adding one special basis function, while by substituting the ``value function candidates'' with their linear expansions, the number of optimization variables becomes the number of basis functions.
As shown by \citet{ALP}, the solution to the resulting LP is within a constant factor of the best approximation to the optimal value function within the span of the chosen bases. However, since the number of constraints in the LP is still proportional to the number of states, it is not obvious whether a solution to the resulting LP can be found in time independent of the number of states (other computations can be done in time independent of the number of states, e.g., using sampling, at the price of a controlled increase of the error, e.g., Theorem 6 of \citep{petrik}).

Most of the literature is thus devoted to designing methods to select a tractable subset of the constraints while keeping the approximation guarantees, as well as keeping computations tractable.
Since a linear objective is optimized by a point on the boundary of the feasible region,
knowing the optimizer would be sufficient to eliminate all but
as many constraints as the number of optimization variables.
The question is how to find a superset of these, or an approximating set, without incurring much computational overhead.
\citet{schuurmans} and \citet{gkp} propose constraint generation in a setting
where the MDP has additional structure (i.e., factorized transition structure).
This additional structure is then exploited in designing constraint generation methods which are able to efficiently generate
violated constraints. A more general approach due to \citet{CS}
is to choose a random subset of the constraints
by choosing states to be included at random from a distribution that reflects the ``importance'' of states.
While constraint generation can be powerful,
it is not known how solution quality degrades with the budget on the constraints generated
(\citeauthor{gkp} note that the number of constraints generated can be at most exponential in a fundamental quantity,
the induced width of a so-called cost-network, which may be large and is in general hard to control).
For constraint sampling,  \citet{CS} prove a bound on the suboptimality, but this bound applies only
in the unrealistic scenario when the constraints are sampled from an \emph{idealized} distribution,
which is related to the stationary distribution of an optimal policy.
While it is possible to extend this result
to any sampling distribution, the bound then scales with the mismatch between the sampling and the idealized
distributions, which, in general, will be uncontrolled.
Another weakness of the bound is related to that when constraints are dropped, the linear program may become
unbounded. To prevent this, \citet{CS}  propose imposing an extra constraint on the optimization variables.
The bound they obtain, however, scales with the \emph{worst approximation error} over this constraint set.
While in a specific example it is shown that this error can be controlled, no general results are derived in this direction.
Later works, such as that of
\citet{SALP,BhatFaMo12:SALPNP},  repeat the analysis of \citet{CS}
in combinations with other ideas. However, no existing work that we know of addresses the above weaknesses of the result of \citet{CS}.

Another interesting approach is to consider the dual linear program, 
where the optimization variables are measures
over the state-action space and the feasible set is the set of  
discounted state-occupation (DSO) measures of all possible policies.
By adding an extra linear constraint on the optimization variables, 
we arrive at an ``approximate dual LP''. 
Feasibility of the resulting LP can be ensured by adding 
basis functions that represent DSO measures of some select policies.
\citet{abbbama14:dualLP} considered this approach together 
and proposed to use a randomized gradient method
to minimize a penalized form of the linear objective to approximately enforce the constraints. 
The algorithm computes the parameters of a measure over the state-action space, 
from which a policy can be derived
by normalization.
The main result of \citeauthor{abbbama14:dualLP} is a bound on the performance loss of this policy relative to the
performance of the best DSO measure in the feasible set of the approximate dual LP,%
\footnote{The result shown is more general, allowing to use measures outside of the feasible set. However,
for such measures the performance bound degrades very rapidly and hence the greater generality
does not seem to add much to the result.}
while iteration cost to obtain an $\epsilon$-competitive solution
is $O(1/\epsilon^4)$ provided that a number of conditions hold.
On the above complexity bound, the constants hidden are instance dependent, but do not depend on
the number of states or actions.
The conditions under which the result is proven are as follows:
{\em (i)} the algorithm needs to be able to sample from distributions not too dissimilar
to the idealized distributions $q_1$, $q_2$, where
$q_1$ is a distribution over state-action pairs, $q_2$ is a distribution over states
and, e.g., $q_1$ is defined by $q_1(s,a) = \norm{\phi(s,a)}/ \sum_{s',a'} \norm{\phi(s',a')}$ with $\phi(s,a) = (\phi_1(s,a),\dots,\phi_k(s,a))^\top$ and $\phi_1,\dots,\phi_k$ being the chosen basis functions specifying the linear constraints
and $(s,a)$ is a state-action pair;
{\em (ii)} the Markov chains underlying all policies in the MDP are uniformly fast mixing;
{\em (iii)} for any state $s'\in \S$ and index $1\le i \le k$, 
the expression $\sum_{s,a} \phi_i(s,a) p_a(s,s')$ can be evaluated in $O(1)$ time,
where $p_a(s,s')$ is the probability of transitioning from state $s$ to $s'$ provided action $a$ is chosen.
While the second assumption limits the scope of MDPs that the result can be applied to,
the other two assumptions limit the choice of the basis functions. 
Among other things, it is unclear how feasibility can be ensured while satisfying {\em (i)}.
Nevertheless, \citeauthor{abbbama14:dualLP} demonstrate promising empirical results on a queuing problem.

Our main contribution is a new suboptimality bound for the case when the constraint system is replaced with a smaller,
linearly projected constraint system. We also propose a specific way of adding the extra constraint to keep the resulting LP bounded.
Rather than relying on combinatorial arguments (such as those at the heart of \citet{CS}), our argument uses previously unexploited geometric structure of the linear programs underlying MDPs.
As a result our bound avoids distribution-mismatch terms and we also remove the scaling with worst approximation error.
A specific outcome of our general result is the realization that it is beneficial to select states so that the ``feature vectors'' of all states when scaled with a fixed constant factor are included in the conic hull of the ``feature vectors'' underlying the selected states. This suggests to choose the basis functions so that this property can be satisfied by selecting only a few states. As we will argue, this property holds for several popular choices of basis functions.
A preliminary version of this paper without the theoretical analysis and without the geometric arguments was published in a short conference communication \cite{aaaipaper}. \todoc{This could be a footnote on page 1 if space works out better that way.}

\if0

The  framework of Markov decision processes (MDPs) is useful to mathematically cast optimal sequential decision making problems arising in science and engineering. At any decision instance, an action is made which yields an immediate reward and the system moves to the next state in a stochastic manner such that the next state depends only on the current state and the action chosen. The set of all states, the state space, is denoted by $S$, and the set of all actions, the action space, is denoted by $A$. Formally, a decision rule is called a policy $u$, ($u\colon S\ra A$) and has an associated value function $J_u$\footnote{Without loss of generality value function $J_u$ can be thought of as a vector in $\Re^{|S|}$.}, ($J_u(s)\colon S\ra \Re$) which specifies the expected cumulative reward obtained by following the policy $u$ starting from each state.\par
The so-called dynamic programming (DP) methods \cite{BertB} compute the optimal value function $J^*$ first and then obtain an optimal policy $u^*$ using $J^*$\footnote{Obtaining $u^*$ from $J^*$ is computationally cheap}. Conventional DP  techniques, such as value-, or policy-iteration, or linear programming (LP) \cite{BertB} can compute the exact value of $J^*$ (and $u^*$). However, a shortcoming of these conventional methods is that their computational overhead grows with the number of states, a practical hindrance when the MDP has a large number of states.\par
This paper is related to LP based techniques for MDPs with large state spaces. One way to handle the large number of states is to restrict the value function to the sub-space spanned by the columns of an $n\times k$ feature matrix $\Phi$\footnote{This is known as linear function approximation (LFA) where in the value function is approximated by $\Phi \tr$, for some $\tr\in \Re^k$. The idea of LFA is not restricted to LP based approach and is also widely used in other approximate dynamic programming methods \cite{dpchapter}, which are not discussed in this paper.}. This sub-space restriction can be accommodated in the LP formulation and results in the approximate linear programming (ALP) formulation \cite{ALP,CS,SALP,ALP-Bor}. ALP computes an approximate value function $\tj=\Phi \tr$ and a sub-optimal policy $\tu$ can be obtained using $\tj$\footnote{It is computationally cheap to obtain such a $\tu$ from $\tj$ or $u^*$ from $J^*$}. The sub-optimal policy can then be used to make the decision and results in a cumulative return given by $J_{\tu}$. The performance of ALP was studied in \cite{ALP} in terms of the quantities $\norm{J^*-\tilde{J}}$ and $\norm{J^*-J_{\tu}}$ ($\norm{\cdot}$ is an appropriate norm) known as prediction error and the control error respectively. Here, prediction error is the error in the approximate value function $\tj$ and the control error is the loss in performance due to the sub-optimal policy $\tu$.\par
A critical shortcoming of ALP is that the number of constraints are of the order of the size of the state space, making it intractable in MDPs with large number of states. A way out of this shortcoming is to choose a subset of constraints at random and drop the rest, thereby formulating a \emph{reduced linear program} (RLP). The performance analysis of RLP can be found in \cite{CS} and RLP has also been shown to perform well in experiments \cite{ALP,CS,CST}. An alternative approach to handle the issue of large number of constraints is to employ function approximation in the dual variables of ALP \cite{ALP-Bor,dolgov}, an approach that was also found useful in experiments. However, to this date, there exist no theoretical guarantees for this approach.\par
In this paper, we generalize RLP to define a generalized reduced linear program (GRLP) which has a tractable number of constraints that are obtained as positive linear combinations of the original constraints.
The salient aspects of our contribution are listed below:
\begin{enumerate}
\item We develop novel analytical machinery to relate $\hat{J}$, the solution to GRLP, and the optimal value function $J^*$ by bounding the prediction error $\norm{J^*-\hj}$ (\Cref{cmt2mn}).
\item We also bound the performance loss due to using the policy $\hu$ that is obtained using $\hj$ (Theorem~\ref{polthe}).
\item Our analysis is based on two novel $\max$-norm contraction operators and our results hold \emph{deterministically}, as opposed to the results on RLP \cite{SALP,CS}, where the guarantees have a probabilistic nature.
\item Our analysis also makes use of arguments based on \emph{Lyapunov} function, an approach much similar to prior works in ALP literature \cite{ALP,SALP}.
\item Our results on GRLP are the first to theoretically analyze the use of linear function approximation of Lagrangian (dual) variables underlying the constraints.
\item A numerical example in controlled queues is provided to illustrate the theory.
\end{enumerate}
A short and preliminary version of this paper without the theoretical analysis can be found in \cite{aaaipaper}.
\fi
\section{Background} 
The purpose of this section is to introduce the necessary background before we can present the problem studied and the main results.

We shall consider finite state-action space, discounted total expected reward MDPs.
We note in passing that the assumption that number of states is finite is mainly made for convenience and at the expense of a more technical presentation could be lifted. We will comment later on the assumption concerning the number of actions.
Let the set of states, or state space be $\S = \{1,2,\dots,S\}$ and let the set of actions be $\A = \{1,2,\dots,A\}$.
For simplicity, we assume that all actions are admissible in all states.
Given a choice of an action $a\in \A$ in a state $s\in \S$, the controller incurs a reward (or gain) of $g_a(s)\in [0,1]$
and the state moves to a next state $s'\in \S$ with probability $p_{a}(s,s')$.
A \emph{policy} $u$ is a mapping from states to actions.\footnote{For the scope of this paper, it suffices to restrict our attention to such policies as opposed to considering history dependent policies. See Chapter 3, and specifically Corollary 3.3 of \cite{Kall17}.}
When a policy is followed, the state sequence evolves as a Markov chain with transition probabilities given by $P_u$ matrix whose $(s,s')$th entry is $P_{u(s)}(s,s')$. Along the way the rewards generated from $g_u$ defined by $g_u(s) \defeq g_{u(s)}(s)$.
The \emph{value} of following a policy from a starting state $s$ is denoted by $J_u(s)$ and is defined as
the expected total reward discounted reward. Thus,


\begin{align}
J_u(s)\defeq \sum_{t=0}^\infty \alpha^t (P_u^t g_u)(s)\,,\nn
\end{align}
where  $\alpha \in (0,1)$ is the so-called discount factor. 
We call $J_u$ the \emph{value function} of policy $u$. The value function of a policy satisfies the fixed-point equation $J_u = T_u J_u$ where the affine-linear operator $T_u$ is defined by $T_u J = g_u + \alpha P_u J$.
An \emph{optimal policy}, is one that maximizes the value simultaneously for all initial states.
The \emph{optimal value function} $J^*$ is defined by $J^*(s) = \max_u J_u(s)$ and is known to be the solution of the fixed-point equation $J^* = T J^*$ where the operator $T$ is defined by $(TJ)(s) = \max_u (T_u J)(s)$, $s\in \S$, i.e., the maximization is component-wise. Optimal policies exist and in fact any policy $u$ such that the equation $T_u J^* = T J^*$ holds is optimal (e.g., Corollary 3.3 of \cite{Kall17}). A policy $u$ is said to be \emph{greedy} with respect to (w.r.t.) $J$ if $T_u J = T J^*$. Thus, any policy that is greedy w.r.t. $J^*$ is optimal.

\if0
The value functions $J_u$ is an elements of $\R^S$. In what follows it will be useful for us to treat it as an $n$-dimensional vector, i.e., an element of $\R^n$, effectively identifying $\R^S$ with $\R^n$ in the natural way. Similarly we identify $\R^{nd}$ with $\R^{S\times A}$, where $d=|A|$.

\textbf{Optimality and the Bellman Equation:} The \emph{optimal policy} $u^*$ is one that in each state $s\in S$ achieves the best possible total expected discounted reward from that state. That is, $J_{u^*}(s) = J^*(s) \eqdef \us{\max}{u\in U} J_u(s)$
where $U$ is the set of all SDPs and $J^*$ is coined the \emph{optimal value function}.
\footnote{In our case an optimal (SDP) $u^*$ exists and is well defined \cite{BertB}.}
Any optimal policy $u^*$ and value function $J^*$ obey the Bellman equation (BE) which states that for all $ s \in S$,
\begin{subequations}\label{bell}
\begin{align}
\label{bellval}J^*(s)&=\max_{ a\in A}\big(g_a(s)+\alpha \us{\sum}{s'\in S}p_a(s,s')J^*(s')\big),~\text{and}\\
\label{bellpol}u^*(s)&\in \underset{a\in A}{\argmax}\big(g_a(s)+\alpha \us{\sum}{s'\in S}p_a(s,s')J^*(s')\big),\,.
\end{align}
\end{subequations}
where ties in \eqref{bellpol} are resolved arbitrarily.
For the subsequent sections, the following definitions will be useful later:
\begin{definition}\label{notations}
\begin{enumerate}[(i)]
\item\label{bellopval} The Bellman operator $T\colon \R^n \ra \R^n$ is given by $(TJ)(s)=\max_{a \in A}\big(g_a(s)+\alpha \sum_{s'} p_a(s,s')J(s')\big).
$
\item \label{bellactval} The Bellman operator (of action values) $H: \R^n \to \R^{nd}$ for state-action values is given by $HJ=[ H_1 J,\cdots,H_d J]^\top\in \R^{nd},$ where $(H_a J)(s)= g_a(s)+\alpha \sum_{s'}p_a(s,s') J(s')$.
\item\label{emat} The $nd\times n$ matrix $E$ is given by $E=[I,\ldots,I]^\top$, i.e., $E$ is obtained by stacking $d$ identical $n\times n$ identity matrices one over the other.
\item\label{greedy} A policy $u_J$ is said to be greedy with respect to (w.r.t.) $J\in \R^n$ if for any $s\in S$,
\begin{align*} u_J(s)\in\underset{a\in A}{\argmax}\big(g_a(s)+\alpha \us{\sum}{s'\in S} p_a(s,s')J(s')\big).\end{align*}
\item\label{norms} The weighted $L_1$-norms $\norm{\cdot}_{1,c}$ with respect to a probability distribution $c$ is given by $
\norm{J}_{1,c}=\sum_{s \in S} c(s)|J(s)|$.
\item The (un)weighted $L_\infty$-norms $\norm{\cdot}_{1,\infty}$
and $\norm{\cdot}_{\infty,\rho}$, $\norm{J}_{\infty}=\max_{s\in S}|J(s)|$ and $\norm{J}_{\infty,\rho}=\max_{s \in S} \frac{|J(s)|}{\rho(s)}$.
\item For $J_1, J_2\in \R^n$ we write $J_1\leq J_2$ when $J_1(s)\leq J_2(s),~\forall s\in S$.
\item We define $\one\in \R^n$ to be the vector whose coordinates are all equal to $1$.
\end{enumerate}
\end{definition}
We would like to point out that from \Cref{notations}, it follows that for any $J\in \R^n$ the condition $J\geq TJ$ can be rephrased as $EJ\geq HJ$.
\fi
\section{The Linearly Relaxed ALP}\label{sec:grlp}
In this section we introduce the computational model used and the ``Linearly Relaxed Approximate Linear Program''
a relaxation of the ALP.

As discussed in the introduction, we are interested in methods that compute a good approximation to the optimal value function.
As noted earlier, at the expense of a modest additional cost, knowing an $O(\epsilon)$ approximation to $J^*$ at a few states suffices to compute actions of an $O(\epsilon)$-optimal policy. We will take a more general view, and we will consider calculating good approximations to $J^*$ with respect to a weighted $1$-norm, where the weights $c$ form a probability distribution over $\S$. Recall that the weighted $1$-norm $\norm{J}_{1,c}$ of a vector $J\in \Re^S$ is defined as $\norm{J}_{1,c}  = \sum_s c(s) |J(s)|$. Note that here and in what follows we identify elements of $\Re^\S$ (functions, mapping $\S = \{1,\dots,S\}$ to the reals) with elements of $\Re^S$ in the obvious way. This allows us to write e.g. $c^\top J$, which denotes $\sum_s c(s) J(s)$.

To introduce the optimization problem we study, first recall that
the optimal value function $J^*$ is the solution of the fixed point equation $TJ^* = J^*$.
It follows from the definition of $T$ that $J^* = \max_u T_u J^* \ge T_u J^*$ for any $u$,
where $\ge$ is the componentwise partial ordering of vectors ($\le$ is the reverse relation).
With some abuse of notation, we also introduce $T_a$ to denote $T_u$ where $u(s) = a$ for any $s\in \S$.
It follows that $J^* \ge T_a J^*$ for any $a\in \A$ and also that $T = \max_a T_a$, where again the maximization is componentwise.
We call a vector $J$ that satisfies $J \ge T_a J$ for any $a\in A$ \emph{superharmonic}. Note that this is a set of linear inequalities.
By our note on $T$ and $(T_a)_a$, these inequalities can also be written compactly as $J \ge T J$.
\if0
Since $T$ is \emph{monotone} (i.e., for any $J_1\le J_2$ it holds that $TJ_1 \le T J_2$) and can be seen to be an $\alpha$-contraction with respect to the maximum norm, if $J$ is superharmonic then $J \ge T J \ge T^2 J \ge \dots \ge J^*$, where the last inequality follows since $T$ is an $\alpha$-contraction and hence $T^n J \to J^*$ in the maximum norm, and hence also componentwise.
Thus, $J^*$ is the ``smallest'' superharmonic function. It follows than that
for any $c\in \Re_+^S \doteq [0,\infty)^S$ and any superharmonic $J$, $c^\top J^*\le c^\top J$ and since $J^*$ is also superharmonic, $\min \{ c^\top J\,:\, J \ge T J \} = c^\top J^*$ and if $c$ is positive-valued then the minimizer of $c^\top J$ amongst all superharminoc functions is $J^*$.
\fi
It is not hard to show then that $J^*$ is the smallest superharminoc function (i.e., for any $J$ superharmonic, $J\ge J^*$). It also follows that for any  $c\in \Re_{++}^S \doteq (0,\infty)^S$, the unique solution to the linear program $\min\{ c^\top J \,:\, J \ge T J \} =\min \{ c^\top J\,:\, J \ge T_a J, a\in \A \} $ is $J^*$.

Now, let $\phi_1,\ldots,\phi_k : \S \to \Re$ be $k$ basis functions.
\todoc{I think we should use $d$ to denote the number of basis functions. Or at least $n$. Low priority.}
The \emph{Approximate Linear Program} (ALP) of \citet{SchSei85}
is obtained by adding the linear constraints $J = \sum_{i=1}^k r_i \phi_i$ to the above linear program. Eliminating $J$ gives \begin{align*}
\min\{ \sum_i r_i c^\top \phi_i \,:\, \sum_i r_i \phi_i \ge g_a + \alpha \sum_i r_i P_a \phi_i, a\in \A, r = (r_i)\in \Re^k \}\,.
\end{align*}
As noted by \citet{SchSei85}, the linear program is feasible as long as $\one$, defined as the vector with all components being identically equal to one, is in the span of $\{\phi_1,\dots,\phi_k\}$.
\emph{For the purpose of computations, it is assumed that the values $c^\top \phi_i$, $i=1,\dots, k$ and the values $(P_a \phi_i)(s)$ and $g_a(s)$ can be accessed in constant time.}
This assumption can be relaxed to assuming that one can access $g_a(s)$ and $\phi_i(s)$ for any $(s,a)$ in constant time, as well as to that one can efficiently sample from $c$, from $P_a(s,\cdot)$ for any $(s,a)$ pair,
but the details of this are the beyond the scope of the present work. As shown by \citet{ALP}, if $\ralp$ denotes the solution to the above ALP then for $\Jalp \doteq \sum_i \ralp(i) \phi_i \doteq \Phi \ralp$ it holds that $\norm{\Jalp - J^*}_{1,c} \le \frac{2 \epsilon}{1-\alpha}$, where $\epsilon = \inf_r \norm{ J^* - \Phi r }_\infty$ is the error of approximating the optimal value with the span of the basis functions $\phi_1,\dots,\phi_k$ and $\norm{J}_\infty = \max_s |J(s)|$ is the maximum norm and $\Phi \in \Re^{S\times k}$ is the matrix formed by $(\phi_1,\dots,\phi_k)$. That the error of approximating $J^*$ with $\Jalp$ is $O(\epsilon)$ is significant: The user can focus on finding a good basis, leaving the search for the ``right'' coefficients to a linear program solver.

While solving the ALP can be significantly cheaper than solving the LP underlying the MDP
and thus it can be advantageous for moderate-scale MDPs,
\todoc{Give the big-Oh costs!} the number of constraints in the ALP is $SA$,
hence the ALP is still intractable for huge-scale MDPs.
To reduce the number of constraints, we consider a relaxation of ALP
where the constraints are replaced with positive linear
combinations of them.
Recalling that the constraints took the form $J \ge g_a + \alpha P_a J$ (with $J = \Phi r$),
choosing $m$ to be target number of constraints, for $1\le i \le m$, the $i$th new constraint is given by
$\sum_a w_{i,a}^\top J \ge \sum_a w_{i,a}^\top(g_a + \alpha P_a J)$,
where the choice of $m$ and that of the vectors $w_{i,a}\in \Re_+^S$ is left to the user.
Note that this results in a linear program with $k$ variables and $m$ constraints, which can be written as
\begin{align}\label{grlp}
\begin{split}
&\underset{r\in \Re^k}{\min}\, \, c^\top \Phi r\\
&\text{s.t.}\mb  \,\sum_a W_a^\top \Phi r\geq \sum_a W_a^\top (g_a + \alpha P_a) \Phi r\,,
\end{split}
\end{align}
where $W_a = (w_{1,a},\dots,w_{m,a}) \in \Re_+^{S \times m}$.
Note that the $(i,j)$th entry of the $m\times k$ constraint matrix of the resulting LP is
$\sum_a  w_{i,a}^\top  \phi_j - \alpha \sum_a w_{i,a}^\top P_a \phi_j$ and assuming that $(w_{i,a})_{a}$ has $p$ nonzero
elements, this can be calculated in $O( p )$ time, making the total cost of obtaining the constraint matrix to be $O(mkp)$ regardless the value of $S$ and $A$. \todoc{Somewhere discuss that the cost of obtaining a policy will depend on $A$.
We can also discuss how $A$ could be folded into states.}

We will call the LP in \eqref{grlp} the \emph{linearly relaxed approximate linear program (LRALP)}.
Any LP obtained using
any constraint selection/generation process can be represented by choosing an appropriate binary-valued matrix
$W^\top = (W_1^\top,\dots,W_A^\top)\in \Re_+^{m\times SA}$. In particular, when the constraints are selected
in a random process as suggested by \citet{CS}, the matrix $W$ would be a random, binary-valued matrix.

Note that the LRALP may be unbounded.
Unboundedness could be avoided by adding an extra constraint of the form $r\in \N$ to the LRALP,
for a properly chosen polyhedron $\N \subset \Re^k$.%
\footnote{
In particular, to obtain their theoretical result, \citet{CS} need the assumption that the set $\N$
is bounded and that it contains $\ralp$. In fact,
the error bound derived by \citeauthor{CS} depends on the \emph{worst}
error of approximating $J^*$ with $\Phi r$ when
$r$ ranges over $\N$. Hence, if $\N$ is unbounded, their bound is vacuous. In the context of a particular application,
\citet{CS} demonstrate that $\N$ can be chosen properly to control this term.
However, no general construction is presented to choose $\N$.}
However, it seems to us that it is downright misleading to think that guaranteeing a bounded solution
will also lead to reasonable solutions.
Thus we will stick to the above simple form, forcing a discussion of how $W$ should be chosen to get meaningful results.%
\footnote{
The only question is whether there is some value in adding constraints beyond choosing $W$ properly.
Our position is that the set $\N$ would most likely be chosen based on very little and general information;
the useful knowledge is in choosing $W$, not in choosing some general set $\N$.
Since randomization does not guarantee bounded solutions, \citet{ALP} must use $\N$:
In their case, $\N$ incorporates
all the knowledge that makes the LP bounded.
}

Further insight into the choice of $W$ can be gained by
considering the Lagrangians of the ALP and LRALP. To write both LP's in a similar form let us introduce $E = (I_{S\times S},\dots,I_{S\times S})^\top$, where $I_{S\times S}$ is the $S\times S$ identity matrix. Further, let $H:\Re^S \to \Re^{SA}$ be the operator defined by 
\begin{align*}
(HJ)^\top = ( (T_1 J)^\top, \dots, (T_A J)^\top )\,.
\end{align*}
Note that $H$, which we call the \emph{linear Bellman operator}, is a linear operator.
Then, the ALP can be written as 
\begin{align*}
\min\{ c^\top \Phi r \,|\, E \Phi r \ge H \Phi r \}\,, \tag{ALP}
\end{align*}
while LRALP takes the form
\begin{align*}
\min\{ c^\top \Phi r \,|\, W^\top E \Phi r \ge W^\top H \Phi r \}\,. \tag{LRALP}
\end{align*}
Hence, their Lagrangians are $\mathcal{L}_{\alp}(r,\lambda) = c^\top \Phi r+\lambda^\top (H\Phi r-E\Phi r)$
$\mathcal{L}_{\lralp}(r,q) = c^\top \Phi r+q^\top W^\top (H\Phi r-E\Phi r)$. Thus, we can view $W q$ as a ``linear approximation''
to the dual variable $\lambda \in \Re_+^{SA}$.
This suggests that perhaps $W$ should be chosen such that it approximates well the optimal dual variable.
If $\Phi$ spans $\Re^{S}$, the optimal dual variable $\lambda^*$ is known to be the discounted occupancy measure underlying the optimal policy (Theorem 3.18, \cite{Kall17}), suggesting that the role of $W$ is very similar to the role of $\Phi$ excepts that the subspace spanned by the columns of $W$ should ideally be close to $\lambda^*$.

\if0
\textbf{Linear function approximation in Primal and Dual Variables:} Let us look at the Lagrangian of ALP and GRLP in
\eqref{lag} and \eqref{lag2} respectively, i.e.,
\begin{align}\label{lag}
\tilde{L}(r,\lambda)=c^\top \Phi r+\lambda^\top (T\Phi r-\Phi r), \\ \label{lag2}\hat{L}(r,q)=c^\top \Phi r+q^\top W^\top (T\Phi r-\Phi r).
\end{align}
Thus, when $Wq = \lambda$, i.e., when $W$ is a set of basis functions that allow
a low dimensional linear representation of the dual variables $\lambda$,
the two problems are the same.
Hence, while ALP employs LFA in its objective function (i.e., use of $\Phi r$), GRLP employs linear approximation both in the objective function ($\Phi r$) as well as the constraints (use of $W$).
To get a sense of how $W$ should be chosen, recall that
the optimal Lagrange multipliers are the discounted number of visits to the ``state-action pairs'' under an optimal policy $u^*$, i.e.,
\begin{align}
\lambda^*(s,u^*(s))&=\big(c^\top(I-\alpha P_{u^*})^{-1}\big)(s)\nn\\
&= \big(c^\top(I+\alpha P_{u^*}+\alpha^2 P_{u^*}^2+\ldots)\big)(s),\nn\\
\lambda^*(s,a)&=0, \qquad \text{for all } a \neq u^*(s),\nn
\end{align}
where $P_{u^*}$ is the probability transition matrix under $u^*$ ($P_{u^*}(s,s') = P_{u^*(s)}(s,s')$, $s,s'\in S$) \cite{dolgov}. Even though we might not have the optimal policy $u^*$ in practice, the fact that $\lambda^*$ is a probability distribution and that it is a linear combination of $\{P_{u^*},P^2_{u^*},\ldots\}$ hints at the kind of features that might be useful for constructing the $W$ matrix.
\fi
\section{Main Results}

The purpose of this section is to present our main results.
Let $\rlr$ be a solution to the LRALP given by \eqref{grlp}
and let $\Jlr = \Phi \rlr$. When multiple solutions exist, we can choose any of them.
For the result, we assume that the LRALP is not unbounded, and hence a solution exist. In fact, we will assume something much stronger. The discussion of why our assumptions are reasonable and how to ensure that they hold is postponed to after the presentation of our results.
Our main results bounds the error $\norm{J^* - \Jlr}_{1,c}$.

The bound is given in terms of the approximation error of $J^*$ with the basis functions $\Phi= (\phi_1,\dots,\phi_k)$, as well as the deviation between two functions, $\Jalpo,\Jlro: \S \to \Re$, which we define next. In particular,
\begin{align*}
\Jalpo(s) & = \min\{ r^\top\phi(s) \,|\, \Phi r \ge J^*, \, r\in \Re^k \}\,,\\
\Jlro(s)    & = \min\{ r^\top\phi(s) \,|\, W^\top E \Phi r \ge W^\top E J^*, \, r\in \Re^k \}\,,
\end{align*}
where $s\in \S$. Recall that $E: \Re^S \to \Re^{SA}$ is defined so that $(E J)^\top = (J^\top, \dots, J^\top)$, i.e., $E$ stacks its argument $A$-fold. Hence, $W^\top E = \sum_a W_a^\top$. Our strong assumption is that $\Jlro$ is finite-valued. Note that $\Jalpo\ge J^*$ reflects the error due to using the basis functions $(\phi_j)_j$, and the magnitude of the deviation $\Jlro-\Jalpo$ reflects the error introduced due to the relaxed constraint system. 

Following \citet{ALP,CS}, we will quantify
the magnitude of the error $\Jlro-\Jalpo$ and also that of the error
of approximating $J^*$ with the subspace spanned by $\Phi$, 
in terms of a \emph{weighted maximum norm}, 
$\norm{J}_{\infty,\psi} = \max_{s\in \S} |J(s)|/\psi(s)$, 
where $\psi: \S \to \Re_{++}$ is a positive-valued weighting function.%
\footnote{As opposed to \citet{ALP} and others, our definition uses division and not multiplication with the weights.
We choose this form for mathematical convenience:
With this definition, nice duality results hold between weighted $1$-norms and weighted maximum norms.
}
As also stressed by \citeauthor{ALP}, 
the appropriate choice of $\psi$ is crucial for MDPs with huge state-spaces:
The problem is that if the range of values of $|J^*(s)|$ in different parts of the state space
differ in orders of magnitude, we do not expect to be able to control the error of approximating $J^*$ uniformly over $\S$.
By choosing the weighting function to reflect the magnitude of $J^*$, the weighted maximum norm is controlled
as soon as the relative errors are and this latter goal may be much easier to achieve than controlling 
absolute errors.

Just like \citet{ALP}, we will also require that $\psi$ is a \emph{stochastic Lyapunov-function} for the MDP.  In particular, we require that the \emph{$\alpha$-discounted stability coefficient} \todoc{modulus probably would be better instead of coefficient.}
\begin{align} \label{eq:betadef}
\beta_\psi \doteq \alpha  \max_{a} \norm{P_a \psi}_{\infty,\psi}
\end{align}
is strictly less than one.
This can be seen to imply that $H: (\Re^S,\norm{\cdot}_{\infty,\psi}) \to (\Re^{SA}, \norm{\cdot}_{\infty,\psi})$ is a contraction,
where for $J = (J_1^\top,\dots,J_A^\top)^\top \in \Re^{SA}$ we let $\norm{J}_{\infty,\psi} = \max_a\norm{J_a}_{\infty,\psi}$.
That $H$ is a contraction will play a crucial role in our results.
Note that the condition $\beta_\psi<1$ is closely related to the condition that for any policy $u$, 
$P_u \psi \le \psi$, which can be viewed as a stability condition on the MDP 
and which appeared in a slightly altered form in studying the stability of MDPs with infinite 
state spaces \citep[e.g.,][]{chemey99a}.
Note also that one can always choose $\psi = \one$, which gives $\beta_\one = \alpha<1$. 
With this, we are ready to state our main result:

\if0
\FloatBarrier
\begin{figure}
\includegraphics[scale=0.7]{cartoon_grlp.pdf}
\caption{
The outer lightly shaded region corresponds to GRLP constraints and the inner dark shaded region corresponds to the original constraints. The main contribution of the paper is to bound $\norm{J^*-\hat{J}_c}$.}
\label{cartoon}
\end{figure}
\Cref{cartoon} shows the solutions to the LP, ALP and GRLP respectively. The error in ALP solution has already been studied in \cite{ALP}. Our objective is to study the extra source of error due to constraint approximation.
\fi

\begin{theorem}[Error Bound for LRALP]
\label{cmt2}
Assume that $c\in \Re_+^S$ is such that $1^\top c = 1$ and that $W \in \Re_+^{SA\times m}$ is nonnegative valued.
Let $\psi\in \Re_+^S$ be in the column span of $\Phi$ and assume that the $\alpha$-discounted stability coefficient of $\psi$ is $\beta_\psi<1$. 
Let $\eps = \inf_{r\in \Re^k}\norm{J^*-\Phi r}_{\infty,\psi}$ 
be the error of approximation $J^*$ using the basis functions in $\Phi$.
Then, \todoc{Oops, why was the coefficient below 3?? Where did results.tex come from??}
\begin{align*}
\norm{J^*-\hj}_{1,c} \leq 
 \frac{2 c^\top \psi}{1-\beta_\psi} \left(
	2.5 \eps
     +\norm{\Jalpo-\Jlro}_{\infty,\psi}\right)\,.
\end{align*}
\if0
\item The control error is bound as\\
$\norm{J^* - J_{\hu}}_{1,c}
\leq 2\left(\frac{1}{(1-\alpha)^2}\right)\, \big( 2~\us{\min}{r\in \Re^k} \norm{J^*-\Phi r}_{\infty}
+\norm{\Gamma J^*-\hg J^*}_{\infty}+\norm{\hj-\hg\hj}_{\infty}\big)$.
\end{enumerate}
\fi
\end{theorem}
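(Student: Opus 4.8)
The theorem bounds $\|J^* - \hj\|_{1,c}$, where $\hj = \Phi\rlr$ is the LRALP solution. The bound decomposes into an approximation-error term $\eps = \inf_r \|J^* - \Phi r\|_{\infty,\psi}$ and a "constraint-relaxation" term $\|\Jalpo - \Jlro\|_{\infty,\psi}$. The key structural facts to exploit are: (i) $J^*$ is the smallest superharmonic function, so $J^* \le \hj$ is *not* guaranteed here (the relaxed constraints are weaker, so $\hj$ could dip below $J^*$), but $\Jlro$ and $\Jalpo$ are defined as smallest functions in the span dominating $J^*$ under the respective constraint systems; (ii) $H$ is a $\beta_\psi$-contraction in $\|\cdot\|_{\infty,\psi}$; (iii) duality between weighted $1$-norm and weighted $\infty$-norm. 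Let me think carefully about what's really going on.

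Let me reconsider. $\Jalpo(s) = \min\{r^\top \phi(s) : \Phi r \ge J^*\}$ — this is the pointwise-smallest element of the span that dominates $J^*$ componentwise. And $\Jlro(s) = \min\{r^\top\phi(s): W^\top E\Phi r \ge W^\top E J^*\}$ uses the relaxed (projected) domination constraints. So $\Jalpo \ge J^*$ and $\Jalpo \ge \Jlro$ pointwise (the LRALP feasible set for *this auxiliary problem* is larger). The quantity $\|\Jalpo - \Jlro\|_{\infty,\psi}$ measures how much the relaxation lets the dominating function sag below $\Jalpo$.

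**Plan for the proof.**

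First I would establish that $\Jlro$ is feasible for the LRALP. The defining constraint of $\Jlro$ is $W^\top E \Phi r \ge W^\top E J^*$; I need to convert this into the LRALP feasibility condition $W^\top E \Phi r \ge W^\top H \Phi r$. Since $J^* \ge T_a J^* $ for all $a$, i.e. $EJ^* \ge HJ^*$, and since $W \ge 0$, I would use the contraction property of $H$ to show that if a function dominates $J^*$ in the $W$-projected sense, then it nearly satisfies the LRALP constraints; an error term controlled by $\beta_\psi$ and $\eps$ appears here. This is where I expect the $2.5\eps$ and the $1/(1-\beta_\psi)$ factor to originate. Concretely, I would bound $W^\top(H\Phi r - HJ^*)$ using $\|H\Phi r - HJ^*\|_{\infty,\psi} \le \beta_\psi \|\Phi r - J^*\|_{\infty,\psi}$, and combine with $W^\top EJ^* \ge W^\top HJ^*$.

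Second, having shown (a perturbed version of) $\Jlro$ is LRALP-feasible, optimality of $\rlr$ gives $c^\top \hj \le c^\top \Jlro$ plus correction terms. Then I would lower-bound $c^\top \hj$: since $\hj$ need not dominate $J^*$, I would write $c^\top(J^* - \hj) = c^\top(J^* - \Jlro) + c^\top(\Jlro - \hj)$ and bound each piece. The first piece is controlled by $\|J^* - \Jlro\|_{\infty,\psi} \le \|J^* - \Jalpo\|_{\infty,\psi} + \|\Jalpo - \Jlro\|_{\infty,\psi}$, where $\|J^* - \Jalpo\|_{\infty,\psi} = O(\eps)$ by a standard argument (the smallest span-dominator of $J^*$ is within $O(\eps)$ of $J^*$ in weighted max-norm). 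The factor $c^\top\psi$ enters via $|c^\top J| \le (c^\top \psi)\|J\|_{\infty,\psi}$, the norm-duality step. For the lower bound on $c^\top \hj$, I would need that $\hj$ cannot be too far below $J^*$ even though it only satisfies relaxed constraints — this requires relating $\hj$ back to a superharmonic-type inequality and invoking the contraction of $H$ once more.

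**The main obstacle.**

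The hard part will be the lower bound on $c^\top \hj$, i.e. controlling how far the LRALP solution can fall *below* $J^*$. Unlike the ALP, where feasibility forces $\Phi r \ge T\Phi r \ge J^*$ so the solution automatically dominates $J^*$, here the relaxed constraints $W^\top E \Phi r \ge W^\top H\Phi r$ do not imply $\hj \ge J^*$ pointwise, so the clean monotonicity argument breaks. I expect the resolution to rest on the geometric/contraction structure advertised in the introduction: using that $H$ is a $\beta_\psi$-contraction to show any LRALP-feasible $\Phi r$ satisfies an approximate fixed-point inequality, from which $\|\Phi r - J^*\|_{\infty,\psi}$ — and hence the downward deviation — is controlled. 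Reconciling the upper bound (from optimality against $\Jlro$) with this lower bound, and tracking constants through the two applications of the contraction, is what should produce the stated coefficient $\tfrac{2c^\top\psi}{1-\beta_\psi}$ multiplying $2.5\eps + \|\Jalpo - \Jlro\|_{\infty,\psi}$.
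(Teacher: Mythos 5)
There is a genuine gap, and it sits exactly where you locate the difficulty: the lower bound on $c^\top \hj$. Your proposed mechanism --- that any LRALP-feasible $\Phi r$ satisfies an approximate fixed-point inequality from which $\norm{\Phi r - J^*}_{\infty,\psi}$ is controlled --- is false. The relaxed constraint $W^\top E\Phi r \ge W^\top H\Phi r$ consists of only $m$ scalar inequalities and says nothing about the pointwise behaviour of $\Phi r$ at states not ``covered'' by $W$; feasible points can lie arbitrarily far below $J^*$ (indeed the LRALP can be unbounded, which is precisely the issue the paper discusses at length). So feasibility plus the contraction of $H$ cannot bound the downward deviation; the bound must come from \emph{optimality}, and in fact from optimality with respect to a whole family of objectives, not just the given $c$.

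The paper's actual route is as follows. It introduces the auxiliary operator $(\hg J)(s)=\min\{r^\top\phi(s): W^\top E\Phi r\ge W^\top HJ\}$, proves via a monotonicity-plus-shift criterion (and the requirement that $\psi=\Phi r_0$ lies in the span of $\Phi$) that $\hg$ is a monotone $\norm{\cdot}_{\infty,\psi}$-contraction with factor $\beta_\psi$, and lets $\hv$ be its fixed point. A contraction argument plus the identity $\hg J^*=\Jlro$ and the bound $\norm{J^*-\Jalpo}_{\infty,\psi}\le 2\eps$ give $\norm{J^*-\hv}_{\infty,\psi}\le (2\eps+\norm{\Jalpo-\Jlro}_{\infty,\psi})/(1-\beta_\psi)$. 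The crucial step you are missing is the claim $\hj\ge\hv$: one takes the LRALP solutions $r_{s'}$ for every objective $c=e_{s'}$, forms $V_0(s)=\min_{s'}r_{s'}^\top\phi(s)$, checks $\hg V_0\le V_0\le\hj$, and iterates the monotone contraction downward to $\hv$. Finally the decomposition $\norm{\hj-J^*}_{1,c}=c^\top(\hj-J^*)+2c^\top(\hj-J^*)^-$ is used: the first term is bounded by $c^\top(\Jalp-J^*)$ (since $\Jalp$ is LRALP-feasible, $c^\top\hj\le c^\top\Jalp$, so no perturbed-feasibility argument for $\Jlro$ is needed) together with the known ALP bound of de Farias and Van Roy, and the second term by $2c^\top(\hv-J^*)^-$ and the weighted-norm duality $\norm{J}_{1,c}\le (c^\top\psi)\norm{J}_{\infty,\psi}$. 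Your outline correctly identifies several ingredients (norm duality, $\Jlro\le\Jalpo$, the contraction of $H$, $\norm{J^*-\Jalpo}_{\infty,\psi}=O(\eps)$), but without the operator $\hg$, its fixed point, and the $V_0$ construction, the downward control of $\hj$ --- the heart of the theorem --- does not go through.
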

Note that the result implicitly assumes that $\hj$ exists, because if $\hj$ does not exist then $\Jlro$ is necessarily unbounded, making the last error term infinite. To ensure that $\psi$ is in the span of $\Phi$, after choosing $\psi$, one can add $\psi$ as one of the basis functions. Alternatively, the bound can also be interpreted to hold for any $\psi$ in the span of $\Phi$ with $\beta_\psi<1$.

As noted earlier, \citet{ALP} prove a similar error bound for $\Jalp$, the solution of the ALP.
In particular, their Theorem 3 states that  under identical assumptions as in our result,
$\norm{J^* - \Jalp}_{1,c} \le \frac{2 c^\top \psi \epsilon}{1-\beta_\psi}$ for $\epsilon$ defined as above
 (the result we cited previously is a simplified form of this bound).
The larger coefficient of $\epsilon$ is probably an artifact of our analysis. 
Note that when $W$ does not reduce the constraints, 
our bound is only a constant factor larger than this previous result.
The extra term $\norm{\Jalpo-\Jlro}_{\infty,\psi}$ can be seen as the price paid for relaxing the constraints.
\todoc{Petrik's thesis gives an example that shows that the  \citet{ALP} is unimprovable in a worst-case sense.
(He also have cool ideas, ie, enriching the constraint system, of how to improve things by solving a problem
of the same type with more or different constraints.)
We should maybe add to the appendix his lower bound and the proof of the upper bound in some future version.
Furthermore, an interesting question is whether the bound in \cref{cmt2} is similarly ``tight''.}

From linear programming theory, it follows that primal boundedness is equivalent to dual feasibility. 
Since the dual of $\min\{c^\top x\,:\, Ax \ge b \}$ is 
$\max\{ y^\top b\,:\, y \ge 0, c = A^\top y\}$, we get
that a necessary and sufficient condition for $\Jlro$ 
to be finite-valued is that for any $s\in \S$, $\phi(s)$ lies in the conic span, 
$\{ U \lambda\,:\, \lambda\in \Re_+^{SA}\}$, of (the columns) of $U = \Phi^\top E^\top W$.
When $W$ is such that its constituents $W_1,\dots,W_A$ are all identical, 
the conic span of $U$ is equal to the conic span of $\Phi^\top W_1$. 
It is particularly instructive 
to consider the case when the common matrix $W_a = (w_{1,a},\dots,w_{m,a})$ ``selects'' the $m$ states,
i.e., when $\{w_{1,a},\dots,w_{m,a}\}= \{ e_{s}\,:\, s\in \S_0 \}$ for some $\S_0\subset \S$, $|\S_0|\le m$,
where $e_s\in \{0,1\}^\S$ are the $s\in \S$ vectors in the standard Euclidean basis.
In this case, the condition that $\phi(s)$ lies in the conic span of $U$ 
is equivalent to that $\phi(s)$ lies in the conic span of $\phi(\S_0) \doteq \{\phi(s')\,:\, s'\in \S_0\}$. 
Thus, to ensure boundedness of $\Jlro$, 
the chosen states should be selected to ``conicly cover'' all the vectors in $\phi(\S)\subset \Re^k$.%
\footnote{The same implies that, under the same condition on $W$, boundedness of the LRALP 
holds if and only if $\sum_s c(s) \phi(s)$ is in the conic span of $\phi(\S_0)$. Note that this is easy to fulfill
if the support of $c$ has a small cardinality by add all states in the support of $c$ to $\S_0$.
}

The next theorem shows that magnitudes of the coefficients used in the conic cover control the size of 
$\norm{\Jalpo-\Jlro}_{\infty,\psi}$. 
For the theorem we let $\Lambda\in \Re_+^{\S\times \S_0}$ be the matrix of conic coefficients: 
For any $s\in \S$, $\phi(s) = \sum_{s'\in \S_0} \Lambda(s,s') \phi(s')$. After the theorem we give constructions 
for creating conic covers.
\begin{theorem}
\label{conetheorm}
Assume that $W_1 = \dots = W_A$, $\{w_{1,a},\dots,w_{m,a}\} = \{ e_s\,:\, s\in \S_0 \}$ 
and that $\phi(\S)$ lies in the conic span of $\phi(\S_0)$ with conic coefficients given by $\Lambda$.
Let $\epsilon = \inf_{r}\norm{ J^* -  \Phi r }_{\infty,\psi}$.
Then, 
\begin{align*}
\norm{\Jalpo-\Jlro}_{\infty,\psi}
\le 
\norm{\Jalpo-J^*}_{\infty,\psi} + (1+ \norm{\Lambda \psi}_{\psi,\infty}) \epsilon\,.
\end{align*}
\end{theorem}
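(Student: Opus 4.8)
The plan is to reduce the relaxed program defining $\Jlro$ to a simple pointwise form, split the target gap through $J^*$, and then use the conic representation together with the approximation error $\epsilon$ to control the part of the gap coming from having dropped constraints. First I would simplify $\Jlro$. Under $W_1 = \dots = W_A$ we have $W^\top E = \sum_a W_a^\top = A\,W_1^\top$, and since the columns of $W_1$ are exactly the standard basis vectors $\{e_s : s\in \S_0\}$, the constraint $W^\top E \Phi r \ge W^\top E J^*$ is equivalent (after dividing by $A>0$) to $\phi(s')^\top r \ge J^*(s')$ for every $s'\in \S_0$. Hence
\begin{align*}
\Jlro(s) = \min\{\, \phi(s)^\top r \,:\, \phi(s')^\top r \ge J^*(s'),\ s'\in \S_0,\ r\in \Re^k \,\}\,,
\end{align*}
while $\Jalpo$ is the same program with the constraints ranging over all $s'\in \S$. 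Because $\S_0\subseteq \S$ the LRALP feasible set is larger, so $\Jlro \le \Jalpo$ pointwise; thus $\Jalpo - \Jlro \ge 0$ and it suffices to bound this gap from above at each state.

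I would then split
\begin{align*}
\Jalpo(s)-\Jlro(s) = \big(\Jalpo(s)-J^*(s)\big) + \big(J^*(s)-\Jlro(s)\big)\,,
\end{align*}
bounding the first bracket trivially by $\psi(s)\,\norm{\Jalpo-J^*}_{\infty,\psi}$ (using $\Jalpo \ge J^*$). The crux is the second bracket, $J^*(s)-\Jlro(s)$. Let $r_s$ attain the minimum in $\Jlro(s)$; its existence follows from $\one\in\mathrm{span}(\Phi)$ (feasibility) and the conic-cover hypothesis (boundedness below). Using $\phi(s)=\sum_{s'\in\S_0}\Lambda(s,s')\phi(s')$ with $\Lambda(s,s')\ge 0$ together with the LRALP constraints $\phi(s')^\top r_s \ge J^*(s')$ gives
\begin{align*}
\Jlro(s) = \phi(s)^\top r_s = \sum_{s'\in\S_0}\Lambda(s,s')\,\phi(s')^\top r_s \ge \sum_{s'\in\S_0}\Lambda(s,s')\,J^*(s')\,.
\end{align*}

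It remains to compare $\sum_{s'\in\S_0}\Lambda(s,s')J^*(s')$ with $J^*(s)$, and this is where I would bring in a near-minimizer $r_\epsilon$ of the approximation error, so that $|J^*(s')-\phi(s')^\top r_\epsilon|\le \epsilon\,\psi(s')$ for all $s'$ (taking a limit if the infimum defining $\epsilon$ is not attained). Expanding $\phi(s)^\top r_\epsilon$ through the \emph{same} conic cover yields $J^*(s)\le \phi(s)^\top r_\epsilon + \epsilon\psi(s) = \sum_{s'}\Lambda(s,s')\,\phi(s')^\top r_\epsilon + \epsilon\psi(s) \le \sum_{s'}\Lambda(s,s')\,J^*(s') + \epsilon\,(\Lambda\psi)(s) + \epsilon\psi(s)$, where $(\Lambda\psi)(s)=\sum_{s'}\Lambda(s,s')\psi(s')$. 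Rearranging and combining with the previous display gives $J^*(s)-\Jlro(s)\le \epsilon\big((\Lambda\psi)(s)+\psi(s)\big)$.

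Finally I would assemble the two bounds, divide by $\psi(s)$, and maximize over $s\in\S$. Using $(\Lambda\psi)(s)/\psi(s)\le \norm{\Lambda\psi}_{\psi,\infty}$ turns the second bracket into $\epsilon\,(1+\norm{\Lambda\psi}_{\psi,\infty})\,\psi(s)$, which yields exactly the claimed inequality. The only genuinely delicate step is the middle one: it is where the conic coefficients $\Lambda$ and the approximation error $\epsilon$ must be combined, and where one must check that the minimizer $r_s$ furnishing the lower bound on $\Jlro(s)$ interacts correctly with the separately chosen near-optimal approximant $r_\epsilon$ (the conic identity for $\phi(s)$ is applied to two different coefficient vectors). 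Everything else is routine bookkeeping with the weighted maximum norm.
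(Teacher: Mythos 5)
Your proposal is correct and follows essentially the same route as the paper's proof: the same decomposition of $\Jalpo(s)-\Jlro(s)$ through $J^*(s)$, the same use of the minimizer $r_s$ together with the conic identity $\phi(s)=\sum_{s'\in\S_0}\Lambda(s,s')\phi(s')$ to get $\Jlro(s)\ge\sum_{s'}\Lambda(s,s')J^*(s')$, and the same comparison of $J^*(s)$ with $\sum_{s'}\Lambda(s,s')J^*(s')$ via a (near-)best approximant, yielding the $\bigl(1+\norm{\Lambda\psi}_{\psi,\infty}\bigr)\epsilon$ term. The only cosmetic difference is that the paper phrases this last step with $\delta=J^*-\Phi r^*$ and a triangle inequality on $|\delta|$, whereas you use one-sided bounds with $r_\epsilon$; both are equivalent here.
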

\todoc{We should note that the conditions can be relaxed: The cone condition is not necessary.
Maybe give this version of the result in the appendix.}
\begin{proof}
Let $r^*$ be such that $\norm{J^* - \Phi r^*}_{\infty,\psi} = \epsilon$ (this exists by continuity) and let 
$\delta = J^* - \Phi r^*$.
Pick any $s\in \S$ and let $r_s = \argmin\{ r^\top \phi(s) \,:\, W^\top E \Phi r \ge W^\top E J^*,\, r\in \Re^k \}$ 
so that $\Jlro(s) = r_s^\top \phi(s)$.
Note that by assumption, for any $s'\in \S_0$, $\Jlro(s') = r_s^\top \phi(s') \ge J^*(s')$.
Now, notice that by definition, $\Jlro \le \Jalpo$ (the LP defining $\Jlro$ is the relaxation of the LP defining $\Jalpo$).
Hence,
\begin{align*}
0\le \Jalpo(s) - \Jlro(s) = \Jalpo(s) - J^*(s) + J^*(s) - \Jlro(s)
\end{align*}
and $\Jlro(s) = r_s^\top \phi(s) = r_s^\top \sum_{s'\in \S_0} \Lambda(s,s') \phi(s') = \sum_{s'\in \S_0} \Lambda(s,s') \Jlro(s')
\ge \sum_{s'\in \S_0} \Lambda(s,s') J^*(s')$.
Combining this with the previous inequality we get
\begin{align*}
0
&\le 
 \frac{\Jalpo(s) - \Jlro(s)}{\psi(s)}
 \le
\frac{\Jalpo(s)-J^*(s)}{\psi(s)} + \frac{J^*(s) - \sum_{s'\in \S_0} \Lambda(s,s') J^*(s')}{\psi(s)}
\,.
\end{align*}
Plugging in $J^*(s) = \phi(s)^\top r^*+\delta(s)$, using again that $\phi(s) = \sum_{s'\in \S_0} \Lambda(s,s') \phi(s')$,
and also using the triangle inequality after taking absolute values, we get
\begin{align*}
 \frac{|J^*(s) - \sum_{s'\in \S_0} \Lambda(s,s') J^*(s')|}{\psi(s)} 
& \le
  \frac{|\delta(s)|}{\psi(s)} + \frac{\sum_{s'\in \S_0} \Lambda(s,s') |\delta(s')|}{\psi(s)} \\
& \le
  \frac{|\delta(s)|}{\psi(s)} + \frac{1}{\psi(s)} \, \sum_{s'\in \S_0} \Lambda(s,s') \psi(s') \frac{|\delta(s')|}{\psi(s')} \\
& \le
  \frac{|\delta(s)|}{\psi(s)} + \norm{\delta}_{\infty,\psi} \frac{\sum_{s'\in \S_0} \Lambda(s,s') \psi(s')}{\psi(s)} \, .
\end{align*}
Combining this with the previous display and noting that $\norm{\delta}_{\infty,\psi} = \epsilon$ finishes the proof.
\if0
Putting things together, we get 
\begin{align*}
\norm{\Jalpo - \Jlro}_{\infty,\psi}
\le
\norm{\Jalpo(s)-J^*(s)}_{\infty,\psi}
+ 
\norm{\delta}_{\infty,\psi} (1+ \norm{\Lambda \psi}_{\infty,\psi})\,,
\end{align*}
finishing the proof.
\fi
\end{proof}

Given $\phi:\S \to \Re^k$, what is the minimum cardinality set $\S_0$ that 
conicly covers $\phi(\S)$ and how to find such a set? Further, how to keep the magnitude of $\norm{\Lambda \psi}_{\infty,\psi}$ small? To control this latter quantity it seems essential to make sure $\S_0$ contains states with high $\psi$-values. 
However, if one is content with a bound that depends on $\norm{\psi}_{\infty}$, one can bound 
$\norm{\Lambda \psi}_{\infty,\psi}$ by $\norm{\psi}_{\infty} \zeta$ where $\zeta = \max_s \sum_{s'\in \S_0} \Lambda(s,s')$,
hence, the second term in the previous bound will be bounded by $(1+\norm{\psi}_{\infty} \zeta) \eps$.

Let us now return to the problem of finding conic covers. 
We will proceed by considering some illustrative examples. 
As a start, consider the case when the basis functions are binary valued.
In this case, it is sufficient and necessary to choose one state for each binary vector 
that appears in $\phi(\S)\subset \{0,1\}^k$.
This gives that $m_0 \doteq |\S_0|\le 2^k$ representative states will be sufficient \emph{regardless of the cardinality of $\S$}.
Further, in this case $\zeta = 1$.
For moderate to large $k$ (e.g., $k\gg 20$), it will quickly become infeasible to keep $2^k$ constraints.
In this case we may need to restrict what features are considered to guarantee the conic cover condition.
Letting $A_i = \{s\in \S\,:\, \phi_i(s)=1 \}$, if for a many pairs $i\ne j$, $A_i$ and $A_j$ do not overlap 
then $N = |\phi(\S)|$ can be much smaller than $2^k$. 
For example, in the commonly used state aggregation procedures 
$A_i \cap A_j = \emptyset$ for any $i\ne j$, giving $N=k$. 
In the more interesting case of 
hierarchical aggregation (when the sets $\{A_i\}$ form a nested hierarchical partitioning of $\S$), 
we have $m_0 \le D \cdot k$ where $D$ is the depth of the hierarchy.

Another favourable example is the case of separable bases.
In this case, the states are assumed to be factored 
and the basis functions depend only on a few factors.
Let us consider a simple illustration.
By abusing notation (redefining $\S$), let $\S = \S_1 \times \S_2$, 
let there be $k=2$ basis functions and assume that $\phi_i(s) = h_i(s)$ for some $h_i:\S_i \to \Re$, $i=1,2$.
Assume further that $0\in h_i(\S_i)$ for both $i$ and specifically let $s_{i0}$ be such that $h_i(s_{i0})=0$.
In this case it is not hard to verify that if $\S_{i0}$ is such that $h_i(\S_i)$ is in the conic span of $h_i(\S_{i0})$ then
$\phi(\S)$ is also in the conic span of $\S_0 \doteq \S_1 \times \{ s_{20} \} \cup \{ s_{10} \} \times \S_2$.
The point is that $|\S_0| \le |\S_1| + |\S_2|$, which is a tolerable increase of growth.
This example is not hard to generalize to more general, ANOVA-like basis expansions. The moral
is that as long as their limited order of interaction (which is usually necessary for information theoretic reasons as well),
the number of constraints may grow moderately with the number of factors (dimensionality) of the state space.

In some cases, finding a conic cover with a small cardinality is not possible. 
This can already happen in simple examples such as when $\S = \{1,\dots,S\}$ (as before) and
$\phi( s ) = (1,s,s^2)^\top$. In this case, the only choice is $\S_0 = \S$. 
In examples similar to this one one possibility is to quantize the range of $\phi$, which may  loose
little on approximation quality, while it creates the opportunity 
to construct a small cardinality conic cover.

Note that the bound of \cite{CS} and our main result can be seen as largely complementary. 
Recall that \citeauthor{CS} consider adding an extra constraint $r\in \N$, while they propose to select all $A$ constraints
from the ALP corresponding to $m$ states chosen at random from some distribution $\mu$. 
Then, with high probability,
they show that, provided that $\ralp \in \N$,
 the extra price paid for relaxing the constraints of the ALP is $O( \rho \epsilon_{\N} k/m)$,
 where $\rho = \max_{s} \frac{\mu^*(s)}{\mu(s)}$, $\mu^* = (1-\alpha)c^\top (I-\alpha P_{u^*})^{-1}$, $u^*$ is an optimal policy,
and $\epsilon_{\N} = \sup_{r\in \N} \norm{J^*-\Phi r}_{\infty,\psi}$.%
\footnote{The paper presents the results for $\mu = \mu^*$ giving $\rho=1$, but the analysis easily extends to the general case.}
The bound is nontrivial when $m\ge \rho \epsilon_{\N} k$.
In general, it may be hard to control $\rho$, or even $\epsilon_{\N}$ while ensuring that $\ralp \in \N$.

\if0
\textbf{On Prediction Error:} The first factor in the right hand side of the prediction error in \Cref{cmt2} is related to the best possible approximation that can be achieved with the chosen feature matrix $\Phi$. This term is an carry over of the upper bound in ALP formulation as shown in \Cref{alpvanilla}. The second factor in the right hand side of the prediction error is related to constraint approximation and is completely defined in terms of $\Phi$, $W$ and $T$, and does not require knowledge of stationary distribution of the optimal policy.\par
\textbf{On Control Error:} The first two terms are quite similar to those in the bound for prediction error. The third term occurs due to the fact that $\Phi \geq T\Phi r$ that holds in the case of ALP does not hold in the case of GRLP.\par
\begin{corollary}[Constraint Sampling]\label{st}
$W\in \{0,1\}^{nd\times m}$Let $s\in S$ be a state whose constraint is selected by $W$ (i.e., for some $i$ and all $(s',a)\in S\times A$,
$W_{s'a,i}=\delta_{s=s'}$.
Then
\begin{align*}
|\Gamma J^*(s)-\hg J^*(s)|<|\Gamma J^*(s)-J^*(s)|.
\end{align*}
\end{corollary}
It is important to note that \Cref{rlpt} holds only in high probability and is valid only under idealized assumption of knowing the optimal policy $u^*$, while \Cref{cmt2} does not have these limitations.
In addition, the error $|\Gamma J^*(s) -\hg J^*(s)|$ (in \Cref{st} ) due to constraint approximation is less than the original projection error $|\Gamma J^*(s)-J^*(s)|$ due to function approximation. This means that for RLP to perform well it is important to retain the constraints corresponding to those states for which the linear function approximation via $\Phi$ is known to perform well.
\fi

\section{Proof of \cref{cmt2}}\label{sec:improv}
In this section we present the proof of the main result, \cref{cmt2}.
The proof uses contraction-arguments.
\newcommand{\G}{\Gamma}
We will introduce a novel contraction operator, $\hg: \Re^S \to \Re^S$, that captures the distortion introduced
by the extra constraint in ALP and the relaxation in LRALP, respectively.
Then we relate the solution of LRALP to the fixed point of $\hg$.

Note that for the proof it suffices to consider the case when $\Jlro$ is finite-valued because otherwise the bound is vacuous.
Also, recall that it was assumed that $\psi$ lies in the column space of $\Phi$, while $\beta_\psi$, the $\alpha$-discounted stability of $\psi$ w.r.t. the MDP (cf. \eqref{eq:betadef}) is strictly below one. We will let $r_0\in \R^k$ be such that $\psi = \Phi r_0$.
We also assumed that the matrix $W$ is nonnegative valued, while $c$ specifies a probability distribution over $\S$: $\sum_s c(s) = 1$ and $c\in \R_+^S$.

The operator $\hg$ are defined as follows: For $J\in \Re^S$, $s\in \S$,
\begin{align*}
(\hg J)(s) & = \min\{ r^\top \phi(s) \,:\, W^\top E \Phi r \ge W^\top E H J ,\, r\in \Re^k\}\,.
\end{align*}
Note that $(\hg J)(s)$ mimics the definition of ALP with $c = e_s$, except that the constraint $J = \Phi r$ is dropped.

Let us now recall some basic results from the theory of contraction maps.
First, let us recall the definition of contractions. Let $\norm{\cdot}$ be a norm on $\Re^S$ and $\rho>0$.
We say that the map $B: \Re^S \to \Re^S$ is $(\rho,\norm{\cdot})$-Lipschitz if for any $J,J'\in \Re^S$, $\norm{ BJ - BJ' } \le \rho \norm{J-J'}$. We say that $B$ is a \emph{$\norm{\cdot}$-contraction} with factor $\rho$
if it is $(\rho,\norm{\cdot})$-Lipschitz and $\rho<1$. It is particularly easy to check whether a map is a contraction map with respect to a weighted maximum norm
if it is known to be \emph{monotone}.
Here, $B$ is said to be monotone if for any $J\le J'$, $J,J'\in \R^S$, $BJ \le BJ'$ also holds, where $\le$ is the componentwise
partial order between vectors. We start with the following characterization of monotone contractions with respect to weighted maximum norms:
\begin{lemma}
\label{lem:maxnormmn}
Let $B:\R^S \to \R^S$, $\psi: \S \to \R_{++}$, $\beta\in (0,1)$.
The following are equivalent:
\begin{enumerate}[(i)]
\item $B$ is a monotone contraction map with contraction factor $\beta$ with respect to $\norm{\cdot}_{\psi,\infty}$.
\label{lem:item:mc}
\item For any $J,J'\in \R^S$, $t\ge 0$, $J \le J' + t \psi$ implies that $BJ \le BJ' + \beta t \psi$.
\label{lem:item:iq}
\end{enumerate}
\end{lemma}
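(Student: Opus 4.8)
The plan is to establish the two implications \eqref{lem:item:mc}$\Rightarrow$\eqref{lem:item:iq} and \eqref{lem:item:iq}$\Rightarrow$\eqref{lem:item:mc} separately, relying throughout on one elementary reformulation: since $\psi$ is positive-valued, for any $t\ge 0$ we have $\norm{J}_{\infty,\psi}\le t$ if and only if $-t\psi \le J \le t\psi$ holds componentwise (the subscript order in $\norm{\cdot}_{\psi,\infty}$ is immaterial). This equivalence is exactly the bridge between the weighted-maximum-norm statement in \eqref{lem:item:mc} and the componentwise order statement in \eqref{lem:item:iq}, and in particular it gives $\norm{t\psi}_{\infty,\psi}=t$ for $t\ge 0$.

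For \eqref{lem:item:mc}$\Rightarrow$\eqref{lem:item:iq}, I would assume $B$ is monotone and a $\beta$-contraction, and take $J,J'$ with $J \le J' + t\psi$ for some $t\ge 0$. The key move is to introduce the auxiliary vector $J'' \doteq J' + t\psi$, so that $J \le J''$. Monotonicity then yields $BJ \le BJ''$. Since $J'' - J' = t\psi$ with $t\ge 0$, the reformulation gives $\norm{J''-J'}_{\infty,\psi}=t$, and the contraction property gives $\norm{BJ''-BJ'}_{\infty,\psi}\le \beta t$, which (again by the reformulation, using one side of the two-sided bound) means $BJ'' \le BJ' + \beta t\psi$. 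Chaining $BJ \le BJ'' \le BJ' + \beta t\psi$ completes this direction. The only mildly non-obvious step is the introduction of $J''$, which is what lets monotonicity and the contraction estimate be combined.

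For \eqref{lem:item:iq}$\Rightarrow$\eqref{lem:item:mc}, monotonicity is immediate by specializing \eqref{lem:item:iq} to $t=0$: if $J\le J'$ then $J \le J' + 0\cdot\psi$ gives $BJ \le BJ'$. For the contraction estimate, given arbitrary $J,J'$ I would set $t \doteq \norm{J-J'}_{\infty,\psi}\ge 0$; the reformulation then yields \emph{both} $J \le J' + t\psi$ and $J' \le J + t\psi$. Applying \eqref{lem:item:iq} to each produces $BJ \le BJ' + \beta t\psi$ and $BJ' \le BJ + \beta t\psi$, i.e.\ $-\beta t\psi \le BJ - BJ' \le \beta t\psi$, which by the reformulation is precisely $\norm{BJ-BJ'}_{\infty,\psi}\le \beta t = \beta\norm{J-J'}_{\infty,\psi}$. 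Since $\beta<1$, this is the claimed contraction.

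I do not anticipate a genuine obstacle: the entire content is the equivalence between the norm bound and the two-sided componentwise bound, together with the symmetric application of \eqref{lem:item:iq} in the converse direction. The points requiring a little care are the nonnegativity of $t$ (so that $\norm{t\psi}_{\infty,\psi}=t$ rather than $|t|$) and the positivity of $\psi$ (so that dividing by $\psi(s)$ is legitimate), both of which are guaranteed by the hypotheses.
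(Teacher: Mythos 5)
Your proposal is correct and follows essentially the same route as the paper's own proof: the forward direction combines monotonicity with the contraction estimate applied to the pair $J'+t\psi$ and $J'$ (whose weighted-max-norm distance is exactly $t$), and the converse direction obtains monotonicity from $t=0$ and the contraction bound by applying the implication symmetrically with $t=\norm{J-J'}_{\infty,\psi}$. If anything, your write-up is slightly cleaner, since you make the auxiliary vector $J''=J'+t\psi$ and the two-sided componentwise reformulation of the norm bound explicit, whereas the paper's displayed chain of inequalities contains minor typographical slips.
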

The proof, which essentially copies that of Lemma 3.1 of \cite{Kall17}, is given for completeness:
\begin{proof}
\if0
The proof is  trivial modification of the proof of Lemma 3.1 of \cite{Kall17} and is thus left as an exercise.
\fi
Introduce $\cdot$ to denote elementwise products: Thus, $(\psi \cdot J)(s) = \psi(s) J(s)$.
We also let $\psi^{-1}(s) = 1/\psi(s)$ and we will use the shorthand $\norm{\cdot} = \norm{\cdot}_{\mn}$.

Let us first prove \eqref{lem:item:mc} $\Rightarrow $ \eqref{lem:item:iq}.
Thus, assume that $B$ is a monotone contraction
map with factor $\beta$. Take any $J,J'$, $t>0$, $J \le J'+ t\psi$.
We have $BJ = B(J + t \psi) - B J' + B J' \le (\psi^{-1} \cdot (B(J + t \psi) - B J' ))\cdot \psi + BJ'
\le \norm{B(J+t\psi) - BJ'} \psi + BJ' \le \beta t \norm{J-J'} \psi +  BJ'$.

For the reverse direction, note that monotonicity follows by taking $t=0$.
Now, let $\eps = \norm{J-J'}$. Then, $J \le J'+ \eps \psi$ and $J' \le J+\eps \psi$.
By monotonicity and the assumed property of $B$ (using $t=\eps\ge 0$),
$-\beta \eps \psi \le BJ - BJ' \le \beta \eps \psi$, which implies
that $\norm{BJ - BJ'} \le \beta$.
\end{proof}
\begin{corollary}
\label{maxnormmn}
\label{cor:maxnormmn}
If $B$ is monotone and there exists some $\beta\in [0,1)$ such that for any $J\in \Re^S$ and any $t>0$,
\begin{align}
\label{eq:shiftmn}
B( J + t \psi) \le B J + \beta t \psi
\end{align}
then $B$ is a $\norm{\cdot}_{\mn}$ contraction with factor $\beta$.
\end{corollary}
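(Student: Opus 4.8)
The plan is to derive this corollary directly from \cref{lem:maxnormmn}, since the corollary's hypotheses are designed precisely to supply item \eqref{lem:item:iq} of that equivalence. So the entire task reduces to verifying that condition \eqref{lem:item:iq}, namely that $J \le J' + t\psi$ with $t\ge 0$ implies $BJ \le BJ' + \beta t \psi$, follows from (a) monotonicity of $B$ and (b) the single-argument shift inequality \eqref{eq:shiftmn} assumed for $t>0$. Once \eqref{lem:item:iq} is established, \cref{lem:maxnormmn} immediately yields \eqref{lem:item:mc}, i.e.\ that $B$ is a $\norm{\cdot}_{\mn}$ contraction with factor $\beta$, which is exactly the claim.

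First I would fix arbitrary $J,J'\in \Re^S$ and $t\ge 0$ with $J \le J' + t\psi$ and split into two cases. In the degenerate case $t=0$ the hypothesis reads $J\le J'$, so monotonicity alone gives $BJ\le BJ' = BJ' + \beta\cdot 0\cdot\psi$; this is where monotonicity (rather than the shift inequality) is doing the work, since \eqref{eq:shiftmn} was only assumed for $t>0$. In the main case $t>0$, I would apply monotonicity to the inequality $J \le J' + t\psi$ to obtain $BJ \le B(J' + t\psi)$, and then apply the shift inequality \eqref{eq:shiftmn} at the point $J'$ to obtain $B(J'+t\psi) \le BJ' + \beta t\psi$. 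Chaining these two gives $BJ \le BJ' + \beta t\psi$, completing the verification of \eqref{lem:item:iq}.

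There is essentially no obstacle here: the corollary is a convenience repackaging of \cref{lem:maxnormmn}, trading the somewhat abstract two-function condition \eqref{lem:item:iq} for the more easily checkable pair of conditions (monotonicity plus a one-function shift bound) that arise naturally when $B$ is the linear Bellman-type operator $\hg$. The only point requiring the smallest care is the boundary value $t=0$, which the shift hypothesis does not cover but which monotonicity handles; I would make sure to flag that the case analysis on $t$ is what closes this gap. With \eqref{lem:item:iq} in hand, invoking \cref{lem:maxnormmn} finishes the proof.
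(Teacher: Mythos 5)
Your proposal is correct and follows essentially the same route as the paper: verify condition \eqref{lem:item:iq} of \cref{lem:maxnormmn} by chaining monotonicity with the shift inequality \eqref{eq:shiftmn}, then invoke the lemma. Your explicit handling of the $t=0$ boundary case is a minor extra care that the paper's proof passes over silently.
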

\begin{proof}
Let $J,J'\in \R^S$, $t\ge 0$ and assume that $J\le J' + t \psi$. By monotonicity $BJ \le B(J'+ t \psi)$, while
by~\eqref{eq:shiftmn}, $B(J'+t\psi) \le BJ' + \beta t \psi$. Hence, $BJ \le BJ' + \beta t \psi$.
This shows that \eqref{lem:item:iq} of \cref{lem:maxnormmn} holds.
Hence, by this lemma, $B$ is a contraction with factor $\beta$ with respect to $\norm{\cdot}_{\mn}$.
\end{proof}
\if0
\begin{proof}
First, we show that for any $t\ge 0$,  $J\in \Re^n$,
$B( J - t \psi) \ge B J - \beta t \psi$ also holds.
To see this define $J' = J-t\psi$. Then, $J = J'+t\psi$, hence $B J \le B J' + \beta t \psi$. Reordering this inequality gives the result.
Let $\eps = \norm{J_1 - J_2}_{\mn}$, where $J_1,J_2\in \Re^n$ are arbitrary.
Then $J_2 - \eps \psi \le J_1 \le J_2 + \eps \psi$.
By the monotonicity of $B$,
$B(J_2 - \eps \psi) \le B J_1 \le B(J_2 + \eps \psi)$.
Using~\eqref{eq:shiftmn}, we get
$B J_2 - \beta \eps \psi \le B J_1 \le B J_2 + \beta \eps \psi$, i.e., $-\beta \eps \psi \le B J_1 - B J_2 \le \beta \eps \psi$, from which the result follows.
\end{proof}
\fi

\if0
In this section we present the main results of this paper in \Cref{cmt2} (we state improved bounds in \Cref{sec:improv}). Our bounds are expressed in terms of two novel contractions operators which we define in \Cref{lubpop,alubpop}. We now define two projection operators that are central to our error analysis and in them we assume that the set $\N'\subset \Re^k$ is such that $\N' = \N + t \one$ for any $t\in \Re$.
\begin{definition}\label{lubpop}
Given $J\in \Re^n$ and the nonnegative valued vector $c\in \Re^n_+$, define $r_{c,J}$ to be the solution to
\begin{align}
\label{lubplp}
\begin{split}
\underset{r\in \N'}{\min} &\,\, c^\top \Phi r\,\mb
\text{s.t.} \mb \Phi r\geq  TJ.
\end{split}
\end{align}
For $J\in \Re^n$, $\Gamma J$, the \emph{least upper} (LU) projection of $J$ is defined as
\begin{align}\label{gamdef}
(\Gamma J)(i)\eqdef(\Phi r_{e_i,J})(i),\quad i=1,\ldots,n\,.
\end{align}
\end{definition}
The definition of the second operator is as follows:
\begin{definition}\label{alubpop}
Given $J\in \Re^n$ and the nonnegative valued vector $c\in \Re^n_+$, define $r'_{c,J}$ to be the solution to
\begin{align}\label{alubplp}
\underset{r\in \N'}{\min}& \,\mb c^\top \Phi r\,\mb
\text{s.t.} \,\,\, W^\top E \Phi r\geq W^\top HJ.
\end{align}
The \emph{approximate least upper} (ALU) projection operator
$\hg \colon \Re^n \ra \Re^n$ is defined as
\begin{align}\label{tgamdef}
(\hg J)(i)\eqdef(\Phi r'_{e_i,J})(i), \mb i=1,\ldots,n\,, J\in \Re^n\,.
\end{align}
\end{definition}
\begin{remark}\label{ubrem}
To understand the meaning of $\Gamma$ (and $\hg$) define
\begin{align}\label{ubclass}
\F_J\eqdef\{\,\Phi r\,:\,\Phi r\geq TJ, r\in \N'\,\},
\end{align}
where $J\in \Re^n$.
Disregarding the constraint $r\in \N'$,
$\F_J$ contains all vectors in the span of $\Phi$ that upper bound $TJ$. Further, since $(\Gamma J)(i) = \min\{ V(i) \,:\, V\in \F_J \}$, it also follows that $ V\ge \Gamma J $ holds for any $V\in \F_J$. Also $\Gamma J\geq T J$.\par
The operators $\Gamma$ and $\hg$ are closely related to ALP \eqref{alp} and GRLP \eqref{grlp} respectively and are only analytical tools that we will need to express our bounds and need not be computed in practice. It turns out that the novel operators ($\Gamma$ and $\hg$) are contraction maps (see \Cref{hgmaxcontramn}), a fact that is a key to our results (\Cref{cmt2mn,polthe}). At the outset we are interested in the candidate value functions in the constraint set $\N$ and want to study them using $\Gamma$ and $\hg$. However since the basis $\Phi$ contains $\one$ we make it helps our analysis to define the set $\N'$ in \Cref{lubplp,alubplp} to be `$\N$ plus its translations by $\one$'.
\end{remark}

\fi

\if0
\subsection{Properties of the Bellman Operator}
We now state without proof the most important properties of the Bellman operator(s).
The proofs are immediate from the definitions, but can also be found in \cite{BertB}.
\begin{lemma}\label{tprop} The following hold:
\begin{enumerate}[(i)]
\item \label{monotone}$T$ is a monotone map, i.e., given $J_1,J_2 \in \Re^n$ such that $J_1\leq J_2$, we have $T J_1\leq T J_2$.
\item \label{shift}
Given $J\in \Re^n$ and $t \in \Re$, we have $T(J+t\one)=TJ+\alpha t\one$.
\item \label{maxnorm}
If $T: \Re^n \to \Re^n$ is any operator that is monotonous and satisfies~\eqref{shift} then
$T$ is a $\max$-norm contraction operator with contraction factor $\alpha \in (0,1)$, i.e., given $J_1, J_2 \in \Re^n$,
$
\norm{TJ_1-TJ_2}_\infty\leq \alpha \norm{J_1-J_2}_\infty.
$
\item \label{uniquesol}
$J^*$ is a unique fixed point of $T$, i.e., $J^*=TJ^*$.
\end{enumerate}
\end{lemma}
\begin{corollary}
If $J\in \Re^n$ is such that $J\geq TJ$ then $J\geq TJ^2\geq \ldots \geq J^*$.
\end{corollary}
Though \cref{tprop} are stated for the Bellman operator $T$, the results also hold for $H$ as well.\par
We now present the analysis to derive the improved bounds where the idea is to show that the novel projection operators ($\Gamma/\hg$) are contraction maps. To this end, we go through steps similar to \Cref{tprop}-\eqref{monotone},~\eqref{shift} and ~\eqref{maxnorm}. Much the results that ensue are based on `Lyapunov' analysis where the idea is to replace the constant function $\one$ by a certain Lyapunov function (see \Cref{def:lyap}) and the $\norm{\cdot}_{\infty}$ by a weighted max-norm (see \Cref{notations}-\eqref{norms}).

\subsection{Analysis using Lyapunov Functions}
\begin{definition}\label{def:lyap}
Let $c,\rho,\chi:S \to \Re_+$ be positive-valued functions. Then for $J\in \Re^n$, $a\in A$ and $s\in S$, define
the discounted maximal inflation of $\chi$ due to $P = (p_a)_{a\in A}$ as $\beta_{\chi}=\max_{s \in S} \frac{\underset{a \in A}{\max}\big(\alpha\sum_{s'}p_a(s,s')\chi(s')\big)}{\chi(s)}$.
The function $\chi:S\to\Re_+$ is a \emph{Lyapunov} function for $P = (p_a)_{a\in A}$ if $\beta_{\chi}<1$.
\end{definition}
\begin{assumption}\label{grlpassmp}
\begin{enumerate}[(i)]
\item \label{lyap} $\psi\colon S \ra \Re_+$ is a Lyapunov function for $P$
and is present in the column span of the feature matrix $\Phi$: For some $r_0\in \Re^k$, $\Phi r_0 = \psi$.
\item \label{ass:n4} The set $\N'$ is such that $\N' = \N + t r_0$ for any $t\in \Re$, where $r_0\in \Re^k$ such that $\Phi r_0 = \psi$.
\item \label{wassmp} $W \in \Re^{nd\times m}_+$ is a matrix with all positive entries.
\end{enumerate}
\end{assumption}
The authors of \cite{ALP} express the error bounds in terms of $\frac{1}{1-\beta_{\psi}}$.  A smaller $\beta_{\psi}$ loosely implies `stability' of the the underlying MDP,  with smaller values representative of higher stability. Prior works in ALP literature \cite{ALP,SALP,CS} make use of Lyapunov functions based analysis to obtain error bounds that exploit the structure of the underlying MDP. In particular, the prior bounds suggests that ALP is likely to generate good approximations when the underlying MDP is stable. We also adopt a similar approach by stating our results using Lyapunov function based arguments.
\fi

Let us now return to the proof of our main result. Recall that the goal is to bound
$\norm{J^*-\Jlr}_{1,c}$ through relating this deviations from the fixed point of $\hg$, which was promised to be a contraction.
Let us thus now prove this.
For this, it suffices to show that $\hg$ satisfies the conditions of \cref{cor:maxnormmn}.
In fact, we will see this holds with $\beta =\beta_\psi$.
\begin{proposition}\label{tgmonotone}\label{gshiftmn}
The operator $\hg$ satisfies the conditions of \cref{cor:maxnormmn} with $\beta =\beta_\psi$, and is thus a
$\norm{\cdot}_{\mn}$-contraction with coefficient $\beta_\psi$.
\end{proposition}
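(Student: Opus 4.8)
The plan is to verify the two hypotheses of \cref{cor:maxnormmn} for $\hg$ with $\beta=\beta_\psi$, namely monotonicity and the shift inequality $\hg(J+t\psi)\le \hg J+\beta_\psi t\,\psi$ for $t>0$; the contraction claim then follows at once from that corollary. Throughout I would work from the defining LP $(\hg J)(s)=\min\{\,r^\top\phi(s)\,:\,W^\top E\Phi r\ge W^\top HJ,\ r\in\Re^k\,\}$, together with the three standing facts that $W\ge 0$, that $H$ (being the stack of the affine-linear, monotone maps $T_aJ=g_a+\alpha P_a J$ with $P_a\ge 0$) is monotone, and that $\psi=\Phi r_0$ lies in the column span of $\Phi$.

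For monotonicity, suppose $J\le J'$. Since $H$ is monotone, $HJ\le HJ'$, and since $W\ge 0$, applying $W^\top$ preserves the inequality, so $W^\top HJ\le W^\top HJ'$. Consequently the feasible polyhedron $\{r:W^\top E\Phi r\ge W^\top HJ'\}$ defining $(\hg J')(s)$ is contained in the one defining $(\hg J)(s)$. Minimizing the same linear objective $r^\top\phi(s)$ over a smaller set can only increase its value, so $(\hg J)(s)\le(\hg J')(s)$ for every $s$, i.e.\ $\hg J\le \hg J'$.

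For the shift inequality, I would first record the elementary identity $T_a(J+t\psi)=T_aJ+\alpha t\,P_a\psi$. The definition of $\beta_\psi$ together with $P_a\psi\ge 0$ gives $\alpha P_a\psi\le \beta_\psi\psi$ componentwise for every $a$, hence $H(J+t\psi)\le HJ+\beta_\psi t\,E\psi$. Now let $r^*$ attain $(\hg J)(s)$, so $W^\top E\Phi r^*\ge W^\top HJ$, and propose the candidate $r=r^*+\beta_\psi t\,r_0$. Using $\Phi r_0=\psi$ and $W\ge 0$, one checks feasibility:
\begin{align*}
W^\top E\Phi r=W^\top E\Phi r^*+\beta_\psi t\,W^\top E\psi\ge W^\top HJ+\beta_\psi t\,W^\top E\psi\ge W^\top H(J+t\psi),
\end{align*}
so $r$ is admissible in the LP for $(\hg(J+t\psi))(s)$. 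Its objective value is $r^\top\phi(s)=(r^*)^\top\phi(s)+\beta_\psi t\,r_0^\top\phi(s)=(\hg J)(s)+\beta_\psi t\,\psi(s)$, and minimality yields $(\hg(J+t\psi))(s)\le(\hg J)(s)+\beta_\psi t\,\psi(s)$ for each $s$, which is exactly the required inequality.

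The algebra above is routine; the point that needs care is that $\hg$ is genuinely a map into $\Re^S$, i.e.\ that each defining LP is both feasible and bounded below, so that the minimum is finite and attained (this is also what makes the minimizer $r^*$ used above legitimate). Boundedness below depends only on the constraint matrix $W^\top E\Phi$ and the direction $\phi(s)$, not on the right-hand side, so it is inherited for all $J$ from the standing assumption that $\Jlro=\hg J^*$ is finite; feasibility holds because scaling $r_0$ makes $W^\top E\Phi(t r_0)=t\,W^\top E\psi$ dominate any fixed right-hand side on every coordinate where the corresponding column of $W$ is nonzero (the zero columns giving the vacuous constraint $0\ge 0$). With finiteness and attainment in hand, the two verified properties let \cref{cor:maxnormmn} conclude that $\hg$ is a $\norm{\cdot}_{\mn}$-contraction with factor $\beta_\psi$.
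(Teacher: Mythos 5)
Your proof is correct and follows essentially the same route as the paper: monotonicity via inclusion of the feasible polyhedra (using $W\ge 0$ and monotonicity of $H$), and the shift inequality via $H(J+t\psi)\le HJ+\beta_\psi t\,E\psi$ together with a translation of the LP by $t\beta_\psi r_0$ (the paper packages this translation as a small change-of-variables claim yielding an exact equality, whereas you exhibit the translated point as feasible, which gives the needed one-sided bound). Your closing remarks on feasibility and boundedness of the defining LPs are a sensible addition that the paper leaves implicit.
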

\begin{proof}
First, note that (as it is well known) $H$ is monotone (all
the $P_a$ matrices in the definition of $H$ are nonnegative valued)
and that it satisfies an inequality similar to \eqref{eq:shiftmn}: For any $t\ge 0$, $J\in \R^S$,
\begin{align}\label{eq:psilin}
\begin{split}
H(J+ t \psi ) \le HJ + \beta_{\psi}\,t\,  E  \psi\,.
\end{split}
\end{align}
This follows again because our assumption on $\psi$ implies that for any $a\in \A$, $\alpha P_a \psi \le \beta_{\psi} \psi$.

Let us now prove that $\hg$ is monotone.
Given $J\in \Re^S$, let $\F'(J)\eqdef\{\,\Phi r\,: W^\top E \Phi r\geq W^\top HJ, r\in \Re^k\,\}\,$.
Choose any $s\in \S$. Since $J_1\leq J_2$, $W$ is nonnegative valued and $H$ is monotone,
we have $W^\top H J_1\leq W^\top H J_2$.
Hence, $\F_{J_2} \subset \F_{J_1}$
and thus $(\hg J_1)(s) \le (\hg J_2)(s)$.  Since $s$ was arbitrary, monotonicity of $\hg$ follows.

Let us now turn to proving that \eqref{eq:shiftmn} holds with $\beta =\beta_\psi$.
By definition, for $s\in \S$, $t\ge 0$, $J\in \R^S$,
$(\hg (J+t\psi) )(s) = \min\{ r^\top \phi(s) \,:\, W^\top E\Phi r \ge W^\top H(J+t\psi), r\in \Re^k \}$.
By \eqref{eq:psilin},
$H(J+t\psi) \le HJ + t \beta_\psi E \psi$
and hence $W^\top H(J+t\psi) \le W^\top (HJ + t \beta_\psi E \psi)$.
Thus,
$(\hg (J+t\psi) )(s) \le  \min\{ r^\top \phi(s)\,:\, W^\top E\Phi r \ge W^\top (HJ+t\beta_\psi E \psi), r\in \Re^k \}$.

To finish, we need the following elementary observation:
\begin{claim}\label{lpsol}
Let $A\in \Re^{u\times v}$, $b\in \Re^u,d\in \Re^v$ and $b_0=Ax_0$ for
some $x_0 \in \Re^v$. 
Then
\begin{align*}
\begin{split}
\MoveEqLeft \min\{d^\top x:Ax\geq b+b_0, x\in \Re^v\} \nn\\
&=\min\{d^\top y:Ay \geq b, y \in \Re^v \}+d^\top x_0.
\end{split}
\end{align*}
\end{claim}
\begin{proof}[Proof of \cref{lpsol}]
Set $y = x-x_0$.
\end{proof}
Now, using \Cref{lpsol} with $A=W^\top E \Phi$, $b=W^\top HJ$, $d=\phi(s)$, $b_0=t\beta_\psi W^\top E \psi$
and $x_0=t \beta_\psi r_0$,  thanks to $\Phi r_0 = \psi$ we have $A x_0 = b_0$.
Hence the desired statement follows from the claim.
\end{proof}

Let us now return to bounding $\norm{J^*-\Jlr}_{1,c}$.
For $x\in \Re$, let $(x)^-$ be the negative part of $x$: $(x)^- = \max(-x,0)$. Then, $|x| = x + 2 (x)^-$. For a vector $J\in \R^S$,
we will write $(J)^-$ to denote the vector obtained by applying the negative part componentwise.
We consider the decomposition
\begin{align}
\label{eq:maindec}
\norm{\Jlr-J^*}_{1,c} = c^\top\!( \Jlr - J^* ) + 2c^\top\!( \Jlr - J^* )^-\,.
\end{align}
Let $\hv$ be the fixed point $\hg$. We know claim the following:
\begin{claim}
We have $\Jlr \ge \hv$, $c^\top \Jalp \ge c^\top \Jlr$.
\end{claim}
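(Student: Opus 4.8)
The plan is to prove the two inequalities separately, both by elementary arguments that rely only on \cref{gshiftmn} (that $\hg$ is a monotone $\norm{\cdot}_{\mn}$-contraction) and on the feasibility structure of the two linear programs.

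First I would dispatch the second inequality, $c^\top \Jalp \ge c^\top \Jlr$, by observing that LRALP is a relaxation of ALP. Indeed, if $r$ is feasible for ALP, then $E\Phi r - H\Phi r \ge 0$ holds componentwise in $\Re^{SA}$, and since $W$ is nonnegative valued, left-multiplying by $W^\top$ preserves the inequality, giving $W^\top E\Phi r \ge W^\top H\Phi r$, i.e.\ $r$ is feasible for LRALP. Hence the feasible region of LRALP contains that of ALP. As both programs minimize the \emph{same} linear objective $c^\top \Phi r$, the LRALP optimum cannot exceed the ALP optimum, so $c^\top \Jlr = c^\top \Phi \rlr \le c^\top \Phi \ralp = c^\top \Jalp$.

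For the first inequality, $\Jlr \ge \hv$, the strategy is to show that $\Jlr$ is ``$\hg$-superharmonic'' and then iterate $\hg$ downward to its fixed point. To see $\hg \Jlr \le \Jlr$, fix any $s\in\S$ and recall $(\hg \Jlr)(s) = \min\{ r^\top \phi(s) : W^\top E\Phi r \ge W^\top H \Jlr,\ r\in\Re^k\}$. Because $\rlr$ is feasible for LRALP, it satisfies $W^\top E\Phi \rlr \ge W^\top H \Phi \rlr = W^\top H \Jlr$, so $\rlr$ is itself feasible for the LP defining $(\hg \Jlr)(s)$. This yields $(\hg \Jlr)(s) \le \rlr^\top \phi(s) = \Jlr(s)$, and since $s$ was arbitrary, $\hg \Jlr \le \Jlr$.

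It then remains to run the standard monotone-contraction argument. Since $\hg$ is monotone by \cref{gshiftmn}, applying $\hg$ repeatedly to $\hg \Jlr \le \Jlr$ gives $\hg^{n+1}\Jlr \le \hg^{n}\Jlr$ for all $n$, so the iterates form a nonincreasing sequence bounded above by $\Jlr$. As $\hg$ is a $\norm{\cdot}_{\mn}$-contraction, Banach's fixed point theorem gives $\hg^{n}\Jlr \to \hv$ in $\norm{\cdot}_{\mn}$; convergence in this weighted maximum norm implies componentwise convergence, and passing to the limit in $\hg^{n}\Jlr \le \Jlr$ yields $\hv \le \Jlr$, as claimed. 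I do not expect a genuine obstacle here; the one point that must be handled with care is that $\hg$ is finite-valued (so that the iteration and its fixed point are well defined), which is exactly why we reduced at the outset to the case where $\Jlro$ is finite, equivalently where the defining LPs are bounded.
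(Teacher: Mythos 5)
Your proof is correct, and for the first inequality it takes a cleaner route than the paper. For $c^\top \Jalp \ge c^\top \Jlr$ you and the paper argue identically (LRALP is a relaxation of ALP since left-multiplication by the nonnegative $W^\top$ preserves the componentwise inequality, and both programs share the objective $c^\top \Phi r$). For $\Jlr \ge \hv$, the paper does not test $\Jlr$ itself against $\hg$; instead it introduces an auxiliary function $V_0(s) = \min_{s'\in\S} r_{s'}^\top\phi(s)$ built from the solutions $r_{s'}$ of the LRALP with objectives $e_{s'}$, proves $\hg V_0 \le V_0 \le \hj$, and then runs the same monotone-contraction iteration you use to conclude $\hv \le V_0 \le \Jlr$. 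Your observation that $\rlr$ is already feasible for the LP defining $(\hg\Jlr)(s)$ — because $W^\top E\Phi\rlr \ge W^\top H\Phi\rlr = W^\top H\Jlr$ is exactly the LRALP constraint — gives $\hg\Jlr\le\Jlr$ directly and makes the per-state solutions $r_{s'}$ and the intermediate $V_0$ unnecessary for this claim (the paper's $V_0$ does yield the slightly stronger chain $\hv \le V_0 \le \Jlr$, but only $\hv\le\Jlr$ is used downstream). Your closing remark about well-definedness is also the right caveat and matches the paper's standing assumption: finiteness of $\Jlro$ is a dual-feasibility condition on the constraint matrix $W^\top E\Phi$ alone, hence it guarantees that $\hg$ is finite-valued on all of $\Re^S$ and that the Banach iteration converges to $\hv$ in $\norm{\cdot}_{\mn}$, which implies the componentwise passage to the limit you invoke.
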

\begin{proof}
The inequality $c^\top \Jalp \ge c^\top \Jlr$  follows immediately from the definitions of $\Jalp$ and $\Jlr$.

To prove the first part let $s\in \S$,  $c=e_s$ and let $r_s$ be a solution to LRALP in \eqref{grlp}.
For $s\in \S$, let $V_0(s)= \min_{s'\in \S} r_{s'}^\top \phi(s)$.

It suffices to show that $V_1\eqdef \hg V_0 \le V_0 \le \hj$.
Indeed, if this holds then
$V_{n+1} = \hg V_n$, $n\ge 1$, satisfies $V_{n+1}\le V_n$ and $V_n \to \hv$ as $n\to\infty$
since $\hg$ is a monotone contraction mapping.

Since  $r_{s'}^\top \phi(s) \ge r_s^\top \phi(s)$ 
also holds for any $s,s'\in \S$,
we have $V_0(s)  = r_s^\top \phi(s)$. 
Also, since $\hj(s) \ge r_s^\top \phi(s)$, 
it follows that $\hj\geq V_0$.
Now,  fix some $s\in \S$ and
define $r'_{e_s,V_0}$ be the solution to the linear program defining $(\hg V_0)(s)$.
We need to show that $V_1(s)=(\hg V_0)(s) = ( r'_{e_s,V_0})^\top \phi(s) \leq V_0(s)$.
By the definition of $r'_{e_s,V_0}$ we know that $( r'_{e_s,V_0})^\top \phi(s) \le r^\top \phi(s)$ 
holds for any $r\in \Re^k$ such that $W^\top E \Phi r \ge W^\top H V_0$.
Thus, it suffices to show that $r_s$ satisfies $W^\top E \Phi r_s \ge W^\top H V_0$.
By definition, $r_s$ satisfies $W^\top E \Phi r_s \ge W^\top H \Phi r_s$.
Hence, by the monotone property of $H$ and since $W$ is nonnegative valued, 
it is sufficient if $\Phi r_s \ge V_0$.
This however follows from the definition of $V_0$.
\end{proof}
Thanks to the previous claim, $(\Jlr - J^*)^- \le (\hv - J^*)^-$ and $c^\top \Jlr \le c^\top \Jalp$. Hence,
from \eqref{eq:maindec} we get
\begin{align*}
\norm{\Jlr-J^*}_{1,c} \le c^\top\!( \Jalp - J^* ) + 2c^\top\!( \hv - J^* )^-\,.
\end{align*}
By Theorem 3 of \cite{ALP}, the first term is bounded by $2 \frac{c^\top \psi}{1-\beta_\psi} \epsilon$, where
recall that $\epsilon = \inf_{r} \norm{J^* - \Phi r}_{\mn}$. Hence, it remains to bound the second term.

For this, note that for any $J\in \R^S$,
$(J)^-\le |J|$
and also that $\norm{J}_{1,c} \le c^\top \psi \norm{ J }_{\mn}$. Hence, we switch to bounding
$\norm{J^* - \hv }_{\mn}$.
A standard contraction argument gives
\begin{align*}
\norm{J^* - \hv }_{\mn}
& =   \norm{ J^* - \hg J^ + \hg J^* -\hg \hv }_{\mn} \\
& \le \norm{ J^* - \hg J^*}_{\mn} + \norm{\hg J^* - \hv }_{\mn} \\
& \le \norm{ J^* - \hg J^*}_{\mn} + \beta_\psi \norm{\hg J^* - \hv }_{\mn}\,.
\end{align*}
Reordering and using another triangle inequality we get
\begin{align*}
\norm{J^* - \hv }_{\mn} \le \frac{
\norm{J^* - \Jalpo}_{\mn} + \norm{ \Jalpo - \hg J^* }_{\mn}
}{1-\beta_\psi}\,.
\end{align*}
We bound the term $\norm{J^* - \Jalpo}_{\mn}$ in the following lemma:
\begin{lemma}\label{bestbndmn}
We have
$ 
\norm{J^*-\Jalpo}_{\mn}\leq 2 \epsilon
$, 
where recall that $\epsilon = \inf_{r\in \Re^k} \norm{ J^*-\Phi r }_{\mn}$.
\end{lemma}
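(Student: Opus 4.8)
The plan is to exploit two facts: that $\Jalpo$ upper-bounds $J^*$ pointwise, and that $\psi$ lies in the column span of $\Phi$. First I would observe that every $r$ feasible for the linear program defining $\Jalpo(s)$ satisfies $\Phi r \ge J^*$, so in particular $\Jalpo(s) = \min\{r^\top\phi(s) : \Phi r \ge J^*,\, r\in\Re^k\} \ge J^*(s)$ for every $s\in\S$. Hence $|J^*(s) - \Jalpo(s)| = \Jalpo(s) - J^*(s)$, and it suffices to bound this nonnegative quantity by $2\epsilon\,\psi(s)$ for each $s$.

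Next I would fix a minimizer $r^*$ of $\norm{J^* - \Phi r}_{\mn}$, which exists by continuity of the norm together with a standard compactness argument, and set $\delta = J^* - \Phi r^*$, so that $|\delta(s)| \le \epsilon\,\psi(s)$ for all $s$. The crux is to produce a cheap feasible point for the $\Jalpo$ program out of $r^*$. Using $r_0$ with $\Phi r_0 = \psi$, I would consider $\tilde r = r^* + \epsilon\, r_0$; then $\Phi \tilde r = J^* - \delta + \epsilon\,\psi$, which is $\ge J^*$ precisely because $\delta(s) \le |\delta(s)| \le \epsilon\,\psi(s)$ componentwise. Thus $\tilde r$ is feasible for the program defining $\Jalpo$. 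Lifting the best-in-norm approximation $\Phi r^*$ upward by $\epsilon\,\psi$ to clear the upper-bound constraint is the one idea the proof really needs, and it is exactly where the assumption $\psi\in\operatorname{span}(\Phi)$ is used.

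Finally, feasibility of $\tilde r$ gives, for each $s$,
\begin{align*}
\Jalpo(s) - J^*(s) \le \tilde r^\top \phi(s) - J^*(s) = -\delta(s) + \epsilon\,\psi(s) \le 2\epsilon\,\psi(s),
\end{align*}
using $-\delta(s) \le |\delta(s)| \le \epsilon\,\psi(s)$ once more. Dividing by $\psi(s)$ and taking the maximum over $s$ yields $\norm{J^* - \Jalpo}_{\mn} \le 2\epsilon$. I do not expect a genuine obstacle here: the only nonroutine step is the shift construction exploiting $\psi = \Phi r_0$, and everything else is the triangle inequality combined with the pointwise ordering $\Jalpo \ge J^*$.
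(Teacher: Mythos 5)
Your proposal is correct and follows essentially the same route as the paper's proof: both exhibit the shifted vector $r^*+\epsilon\, r_0$ as a feasible point for the LP defining $\Jalpo$, use $\Jalpo \ge J^*$ for the lower bound, and conclude $0 \le \Jalpo - J^* \le \Phi r^* - J^* + \epsilon\,\psi$ before dividing by $\psi$ and taking the maximum. There is nothing substantive to add or correct.
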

\begin{proof}
Let $r^*\eqdef \argmin_{r\in \Re^k} \norm{ J^*-\Phi r }_{\mn}$.
First, notice that $\Jalpo \ge J^*$.
Hence, $0\le \Jalpo - J^*$.
Now let $r' = r^* + \epsilon r_0$. Then,
$\Phi r' = \Phi r^* + \epsilon \psi \ge J^*$, where the equality follows by the definition of $r_0$ and
the inequality follows by the definition of $\epsilon$.
Hence, $r'$ is in the feasible set of the LP defining $\Jalpo$ and thus $\Jalpo \le \Phi r'$.
Thus, $0\le \Jalpo - J^* \le \Phi r^* - J^* + \epsilon \psi$. Dividing componentwise by $\psi$,
taking absolute value and then taking maximum of both sides gives the result.
\end{proof}
The proof of the main result is finished by noting that $\hg J^* = \Jlro$ and the chaining the inequalities we derived.

\if0
Now, recall further that if $B:\R^S \to \R^S$ is a $(\beta,\norm{\cdot})$-contraction then it has a unique fixed point $J^*_B\in \R^S$ (i.e., $B J^*_B= J^*_B$). Further, for any $J\in \R^S$,
\begin{align}
\label{eq:csdw}
\frac{\norm{J-J^*_B}}{1-\beta} \le \norm{ BJ - J } \le (1+\beta) \norm{J-J^*_B}\,.
\end{align}
That is, $\norm{BJ-J}$ is closely tied to the distance of $J$ from $J^*_B$.%
\footnote{This is also a standard result. The upper bound can be obtained using a triangle inequality and exploiting that $J^*_B$ is a fixed point of $B$ which is a contraction. The lower bound follows by writing $J - J^*_B = (\sum_{k=0}^n B^k J - B^{k+1} J ) + B^{n+1} J - B^{n+1} J^*_B$, taking norms using the triangle inequality together with the contraction property of $B$ and finally letting $n\to \infty$.}
We will repeatedly apply this inequality.

Let $\hv$ be the fixed point $\hg$.
Then, by the triangle inequality,
\begin{align*}
\norm{J^*-\Jlr}_{1,c} \le \norm{J^*-\hv}_{1,c} +\norm{\hv-\Jlr}_{1,c}\,.
\end{align*}
Let us first bound the second term. Recall that $\Jlr = \Phi \rlr$.

\if0
We note in passing that if \cref{grlpassmp}-\eqref{lyap} holds, it follows that for any $J\in \Re^n$ and $t>0$,
\begin{align}\label{eq:psilin}
\begin{split}
T(J+ t \psi ) \le TJ + \beta_{\psi}\,t\,  \psi,\\
H(J+ t \psi ) \le HJ + E \beta_{\psi}\,t\,  \psi\,.
\end{split}
\end{align}
\fi

\noindent

\begin{corollary}\label{tmaxnormmn}
$T$ is a $\norm{\cdot}_{\mn}$-contraction with factor $\beta_{\psi}$.
\end{corollary}

\begin{theorem}\label{hgmaxcontramn}
The operator $\hg:\Re^n\to\Re^n$  is a contraction operator in $\norm{\cdot}_{\mn}$ with factor $\beta_{\psi}$.
\end{theorem}
\begin{proof}
Follows from \cref{maxnormmn,gshiftmn}.
\end{proof}
In what follows we denote by $\hv$ the unique fixed point of $\hg$, i.e., $\hv = \hg \hv\,$.

\begin{lemma}\label{relation2}
The vectors $\hv,\tj$ obey $\tj\geq\hv$.
\end{lemma}
\begin{proof}
Let $ r_1,  r_2,\ldots, r_n$ be solutions to ALP in \eqref{alp} (with an additional constraint that the solution be restricted to $\N$) for $c=e_1, e_2,\ldots,e_n$, respectively,
and define $V_0\in \Re^n$ by $V_0(i)=\underset{j=1,\ldots,n}{\min}(\Phi r_j)(i)$, $1\le i \le n$. The rest of the proof
follows in the same manner as the proof of \cref{relation1}.
\end{proof}
\begin{lemma}\label{srw}
A vector
$\hat{r} \in \Re^k$ is a solution to GRLP \eqref{grlp} iff it solves the following program:
\begin{align}\label{grlpeqprog}
\begin{split}
\min_{r\in \N}\, &\norm{\Phi r-\hv}_{1,c}\,\mb
\text{s.t.}\mb \, W^\top E \Phi r\geq W^\top H \Phi r.
\end{split}
\end{align}
\end{lemma}
\begin{proof}
We know from \Cref{relation1} that $\hj = \Phi \hr \geq\hv$, and thus
the solutions to \eqref{grlp} do not change if we add the constraint $\Phi r \ge \hv$.
Now, under this constraint, minimizing $c^\top \Phi r$ is the same as minimizing
\begin{align*}
\norm{\Phi r-\hv}_{1,c}=\sum_{i=1}^n c(i) |(\Phi r)(i)-\hv(i)|=c^\top \Phi r-c^\top \hv\,.
\end{align*}
\end{proof}
\begin{lemma}\label{srwalp}
A vector
$\tr \in \Re^k$ is a solution to ALP \eqref{alp} iff it solves the following program:
\begin{align}\label{grlpeqprog}
\begin{split}
\min_{r\in \Re^k}\, &\norm{\Phi r-\hv}_{1,c}\,\mb
\text{s.t.}\mb \, \Phi r\geq  T \Phi r.
\end{split}
\end{align}
\end{lemma}
\begin{proof}
We know from \Cref{relation2} that $\tj = \Phi \tr \geq\hv$.
The rest of the argument follows in the same manner as the proof for \cref{srw}.
\end{proof}
\begin{lemma}\label{cmt1mn}
We have
\begin{align}
\norm{J^*-\hv}_{\mn}
& \leq \frac{1}{{1-\beta_{\psi}}}\big(2\norm{J^*-\Phi r^*}_{\mn}
\nn\\
& {}\qquad \qquad+\norm{\Gamma J^*-\hg J^*}_{\mn}\big).
\end{align}
\end{lemma}
\begin{proof}
Recall that $\hg \hv=\hv$. By the triangle inequality,
\begin{align*}
\norm{J^*-\hv}_{\mn}
& \leq \norm{J^*-\hg J^*}_{\mn}+\norm{\hg J^*-\hg \hv}_{\mn}\\
&\leq \norm{J^*-\hg J^*}_{\mn}+\beta_\psi \norm{ J^*- \hv}_{\mn},
\end{align*}
and so by reordering and with another triangle inequality,
\begin{align*}
\norm{J^*-\hv}_{\mn} \nn
&\le \frac{\norm{ J^*-\hg J^*}_{\mn}}{1-\beta_\psi}\nn\\
&\le \frac{\norm{ J^*-\Gamma J^*}_{\mn}+\norm{\Gamma J^* - \hg J^*}_{\mn}}{1-\beta_\psi}\,.
\end{align*}
The proof follows by combining this and \cref{bestbndmn}.
\end{proof}
We now recall Lemma~$5$ from Section 4.2 of \cite{ALP}.
For this result, recall that $r_0 \in \Re^k$ is the vector such that $\psi = \Phi r_0$.
\begin{lemma}\label{restate}
For  $r \in \Re^k$ arbitrary, let $r'$ be
\begin{align}
r'= r+\norm{J^*-\Phi r}_{\mn} \left(\frac{1+\beta_{\psi}}{1-\beta_{\psi}}\right)\, r_0.
\end{align}
Then $r'$ is feasible for ALP in \eqref{alp}.
\end{lemma}
Recall that $\hv$ is the fixed point of $\hg$ and $\hj=\Phi \hr$ is the solution to GRLP
\eqref{grlp}.
\begin{theorem}\label{mt2mn}
We have
\begin{align}
\norm{\hj-\hv}_{1,c}&\leq \frac{c^\top \psi}{1-\beta_\psi}(4\norm{J^*-\Phi r^*}_{\mn}\nn
+\norm{\Gamma J^*-\hg J^*}_{\mn}).
\end{align}
\end{theorem}
\begin{proof}
Let $\gamma=\norm{J^*-\Phi r^*}_{\mn}$.
Then, by choosing $r'$ as in \Cref{restate} we have
\begin{align}
\norm{\Phi r'-J^*}_{\mn}\nn
&\leq \norm{\Phi r^*-J^*}_{\mn}+\norm{\Phi r'-\Phi r^*}_{\mn}\nn\\
&=\gamma+\frac{1+\beta_\psi}{1-\beta_\psi}\gamma
=\frac{2}{1-\beta_\psi}\gamma.
\label{part}
\end{align}
Now, $r'$ is feasible for ALP in \eqref{alp} by \cref{restate}.
Then from \cref{srwalp} it follows that
\begin{align}
\norm{\hj-\hv}_{1,c}
&\leq\norm{\Phi \tr-\hv}_{1,c}\leq\norm{\Phi r'-\hv}_{1,c}\nn\\
&=\sum_{s\in S}c(s)\psi(s)\frac{|\Phi r'(s)-\hv(s)|}{\psi(s)}\nn\\
&\leq c^\top \psi \norm{\Phi r'-\hv}_{\mn}\nn\\
&\leq c^\top \psi (\norm{\Phi r'-J^*}_{\mn}+\norm{J^*-\hv}_{\mn}).\nn
\end{align}
The result follows from \cref{cmt1mn} and \eqref{part}.
\end{proof}
\begin{theorem}[Prediction error bound]
\label{cmt2mn}
It holds that
\begin{align}\label{finalbndmn}
\begin{split}
\norm{J^*-\hj}_{1,c}
&\leq\frac{c^\top\psi}{1-\beta_\psi}(6 \norm{J^*-\Phi r^*}_{\mn}\\
&+2\norm{\Gamma J^*-\hg J^*}_{\mn}).
\end{split}
\end{align}
\end{theorem}
\begin{proof}
We have
\begin{align}
\norm{J^*-\hj}_{1,c}\nn
&\leq\norm{J^*-\hv}_{1,c}+\norm{\hv-\hj}_{1,c}\nn\\
&\leq c^\top \psi \norm{J^*-\hv}_{\mn}+\norm{\hv-\hj}_{1,c}.\nn
\end{align}
The result now follows from \Cref{cmt1mn} and \Cref{mt2mn}.
\end{proof}
\fi
\if0
We now bound the performance of the greedy policy $\hu$.
\begin{theorem}[Control Error Bound]
\label{polthe}
Let $\hu$ be the greedy policy with respect to the solution $\hj$ of GRLP and $J_{\hu}$ be its value function.
Then,
\begin{align}\label{polthebnd}
\norm{J^* - J_{\hu}}_{1,c}
&\leq 2\left(\frac{c^\top \psi}{(1-\beta_{\psi})^2}\right)\, \big( 2\norm{J^*-\Phi r^*}_{\mn}
\nn\\&
+\norm{\Gamma J^*-\hg J^*}_{\mn}+\norm{\hj-\hg\hj}_{\mn}\big).
\end{align}
\end{theorem}
\begin{proof}
By the triangle inequality,
\begin{align*}
\norm{J^*-J_{\hu}}_{1,c}&\leq \norm{J^*-\hj}_{1,c}+\norm{J_{\hu}-\hj}_{1,c}\,.
\end{align*}
Let us now bound the second term on the right-hand side.
Since $\hu$ is greedy w.r.t. $\hj$, it holds that $T_{\hu} \hj = T \hj$.
Also, $T_{\hu} J_{\hu} = J_{\hu}$.
Hence, $J_{\hu} - \hj = T_{\hu} J_{\hu} - T_{\hu} \hj + T \hj - \hj
=\alpha P_{\hu} (J_{\hu}- \hj) + T\hj - \hj$.
Hence,
\begin{align}\label{polderv}
||J_{\hu}-\hj||_{1,c}\nn
&=||(I-\alpha P_{\hu})^{-1}(T\hj-\hj)||_{1,c}\nn\\
&\leq c^\top(I-\alpha P_{\hu})^{-1}|T\hj-\hj|\nn\\
&\leq c^\top (I-\alpha P_{\hu})^{-1} \,\psi\, \norm{T\hj-\hj}_{\mn}\nn\\
&\leq \frac{c^\top \psi}{1-\beta_{\psi}}\norm{T\hj-\hj}_{\mn}\nn\\
&\leq \frac{c^\top \psi}{1-\beta_{\psi}}(\norm{T\hj-TJ^*}_{\mn} +\norm{J^*- \hj}_{\mn})\nn\\
&\leq \frac{c^\top \psi}{1-\beta_{\psi}}(1+\beta_{\psi})\norm{J^*- \hj}_{\mn},
\end{align}
where in the second inequality, we use Jensen's inequality and $|T\hj - \hj|$ stands for the
vector whose $i$th component is $|(T\hj)(i) - \hj(i)|$. Further, the last inequality follows
since $T$ is a $\norm{\cdot}_{\mn}$ contraction with factor $\beta_{\psi}$ as noted earlier.
Hence,
\begin{align}
&\norm{J^*-J_{\hu}}_{1,c}\nn\\
&\leq c^\top \psi \norm{J^*-\hj}_{\mn}+c^\top \psi\frac{1+\beta_\psi}{1-\beta_\psi}\norm{J^*- \hj}_{\mn}\nn\\
&=\frac{2c^\top \psi}{1-\beta_{\psi}}\norm{J^*- \hj}_{\mn}.
\end{align}
Now in a manner similar to \cref{cmt2mn} we have
\begin{align}
\norm{J^*- \hj}_{\mn}&\leq \norm{J^*- \hv}_{\mn}+\norm{\hv -\hj}_{\mn}\nn
\end{align}
The result now follows by substituting the bound on $\norm{J^*- \hv}_{\mn}$ from \cref{cmt1mn} and the fact that $\norm{\hv-\hj}_{\mn}\leq \frac{1}{1-\beta_{\psi}}\norm{\hj-\hg\hj}_{\mn}$.
\end{proof}
\begin{theorem}[Constraint Sampling]
Let $s\in S$ be a state whose constraint is selected by $W$ (i.e., for some $i$ and all $(s',a)\in S\times A$,
$W_{s'a,i}=\delta_{s=s'}$.
Then
$
|\Gamma J^*(s)-\hg J^*(s)|<|\Gamma J^*(s)-J^*(s)|.
$
\end{theorem}

\begin{proof}
Let $r_{e_s,J^*}$ and ${r}'_{e_s,J^*}$ be solutions to the linear programs in \eqref{lubplp} and \eqref{alubplp} respectively for $c=e_s$ and $J=J^*$. It is easy to note that $r_{e_s,J^*}$ is feasible for the linear program in \eqref{alubplp} for $c=e_s$ and $J^*$, and hence it follows that $(\Phi r_{e_s,J^*})(s)\geq (\Phi {r}'_{e_s,J^*})(s)$. However, since the constraints with respect to state $s$ have been chosen we know that $(\Phi {r}'_{e_s,J^*})(s)\geq J^*(s)$. The proof follows from noting that $(\Gamma J^*)(s)=(\Phi r_{e_s,J^*})(s)$ and $\hg J^*(s)=(\Phi {r}_{e_s,J^*})(s)$.
\end{proof}
\fi

\section{Numerical Illustration}
In this section, we show via an example in the domain of controlled queues the consequences of \cref{conetheorm},
which bounded the error when the constraints are chosen based on selecting a set of representative states
(further preliminary experimental results have been reported in \cite{aaaipaper}).

\textbf{Model:} We ran the experiments in the context of a queuing model similar to the one in Section~5.2 of \cite{ALP}. We consider a (simple) small scale model so that we can compare with the optimal policy.
At the same time, we will use a small number of basis functions and constraints, to ``stress-test'' the algorithm.
The queuing system has a single queue with random arrivals and departures. 
The state of the system is the queue length with the state space given by $\S=\{0,\ldots,S-1\}$, 
where $S-1$ is the buffer size of the queue. 
The action set $\A=\{1,\ldots,A\}$ is related to the service rates. 
We let $s_t$ denote the state at time $t$. 
The state at time $t+1$ when action $a_t \in \A$ is chosen is given by $s_{t+1}= s_{t}+1$ with probability $p$, 
$s_{t+1}= s_{t}-1$ with probability $q(a_t)$ and $s_{t+1}= s_t$, with probability $(1-p-q(a_t))$. 
For states $s_t=0$ and $s_t=S-1$, the system dynamics is given by $s_{t+1}= s_{t}+1$ 
with probability $p$ when $s_t=0$ and $s_{t+1}=s_t-1$ with probability $q(a_t)$ when $s_t=S-1$. 
The service rates satisfy $0<q(1)\leq \ldots\leq q(A)<1$ with $q(A)>p$ so as to ensure `stabilizability' of the queue.
The reward associated with  action $a \in \A$ and state $s\in \S$ is given by $g_a(s)=-(s/N+q(a)^3)$ 
(the idea here is to penalize higher queue lengths and higher service rates). \todoc{Was this used by others; if so,
cite them!}

\textbf{Parameter Settings:} 
We ran our experiments for $S=1000$, $A=4$ with $q(1)=0.2$, $q(2)=0.4$, $q(3)=0.6$, $q(4)=0.8$, $p=0.4$ 
and $\alpha=1-\frac{1}{S}$.
The moderate size of $S=1000$ enabled us to compute the exact value of $J^*$ (the most expensive part of the computation).
We made use of polynomial features in $\Phi$ (i.e., $1,s,\ldots,s^{k-1}$)
since they are known to work reasonably well for this domain \cite{ALP}.
Note hat the conic span conditions will only be met with some lag, unless all the constraints are selected.
Hence, these features allow us to test the limits of the theory. 
We chose $k=4$, a low number, to counteract  that the MDP is small scale. 

\textbf{Experimental Methodology:} We compare two different sampling strategies $(i)$ based on the \emph{cone} conditions, and $(ii)$ based on constraint sampling. The two strategies are compared via  \emph{lookahead} policies, wherein, the action at state $s$ is obtained by computing the approximate value functions of the next states and selecting the action that leads to the larger estimated value. The details are as follows:
\emph{Case (i)}: Except for the corner states i.e., $s=0$ and $s=999$, each state $0<s<S-1$ has two next states namely $s'=s-1$ and $s'=s+1$. We formulate two separate LRALPs (or just one LRALP for $s=0$ and $s=S-1$) for next states. When formulating the LRALP for state $s'$, we let $c=e_{s'}$ and choose the constraint corresponding to state $s'$ to ensure the cone condition to be met for LRALP. We choose $5$ more constraints corresponding to states $1,200,400,600,800,999$ (uniformly spaced across the state space) and compute $\hat{J}_{e_{s'}}$. 
The number of constraints is kept very small as a way of emulating that in large-scale problems we cannot expect a dense covering of the state-space when selecting the constraints.
The lookahead policy is formulated as
$u_{LRA}(s)=\argmin_{a\in A} g_a(s)+\sum_{s'\in S}p_a(s,s')\hat{J}_{e_{s'}}(s')$. \todoc{Did we have one more constraint here than in the CS case? Why?} 
\emph{Case (ii)}: In a manner similar to \emph{Case (i)}, we formulate two separate LRALPs for next states. However, as opposed to the previous case, when formulating the LRALP for state $s'$, we sample $m=6$ states (defining the constraints) from a distribution dependent on $s'$.
We experimented with two sampling distributions that lead to two the lookahead policies that we denote by $u_{CS-ideal}$ and $u_{CS}$, respectively.
The sampling distribution that defines $u_{CS-ideal}$ 
is the sampling distribution that minimizes the upper bound proved by \citet{CS}.
In particular, the sampling distribution used at state $s'$ is
$c_{s'}=e^\top_{s'}(1-\alpha)(I-\alpha P_{u^*})^{-1}$, with $e_s$ denoting the standard basis vector which is $1$ in the $s^{th}$ co-ordinate and $0$ in all the other co-ordinates. This sampling distribution is used as a baseline; it is unrealistic to assume that one would be able to sample from this distribution without access to the optimal policy $u^*$, which is the quantity of ultimate interest.
As a more realistic approach, we also consider sampling from 
$c_{s'}(s)=\kappa (1-\alpha)(\alpha)^{|s'-s|}$, 
where $\kappa>0$ is a normalization factor that ensures that $c_{s'}$ is a distribution.
Again, we sample  $m=6$ states. This leads to the policy $u_{CS}$. 

\FloatBarrier
\begin{figure}[htp]
\begin{center}
\begin{minipage}{0.9\textwidth}
\resizebox{1.0\textwidth}{!}{
\begin{tabular}{cc}
\begin{tikzpicture}[scale=1,font=\normalsize,]
\begin{axis}[legend style={at={(0.5,-0.2)}, anchor=north}]
\addplot[smooth,very thick, mark=diamond, each nth point=100] plot file {./V_LP};
\addplot[dashed,very thick, mark=+, each nth point=100] plot file {./V_pol_CS};
\addplot[dashed,very thick, mark=o, each nth point=100] plot file {./V_pol_CS_ideal};
\addplot[dotted,very thick,mark=square, each nth point=100] plot file {./V_pol_cone};
\addlegendentry{$J^*$};
\addlegendentry{$J_{CS}$};
\addlegendentry{$J_{CS-ideal}$};
\addlegendentry{$J_{LRA}$};

\end{axis}
\end{tikzpicture}

&
\begin{tikzpicture}[scale=1,font=\normalsize]
\begin{axis}[legend style={at={(0.5,-0.2)}, anchor=north}]
\addplot[very thick] plot file {./policy};
\addplot[dashed,very thick] plot file {./policy_CS};
\addplot[dashed, very thick, mark=+, each nth point=10] plot file {./policy_CS_ideal};
\addplot[dotted,very thick] plot file {./policy_cone};
\addlegendentry{$u^*$};
\addlegendentry{$u_{CS}$};
\addlegendentry{$u_{CS-ideal}$};
\addlegendentry{$u_{LRA}$};

\end{axis}
\end{tikzpicture}

\end{tabular}
}
\end{minipage}
\end{center}
\caption{Results for a single-queue with polynomial features. 
On both figures the $x$ axis represents the state space: the length of the queue.
The left-hand-side figure shows the value functions
of the various policies computed, alongside with the optimal value function (higher values are better), while
the right-hand side subfigure shows the underlying policies.
``CS'' and ``CS-ideal'' stand for constraint sampling, while LRA stands for choosing the constraints based on
geometric principles proposed in the paper. For further details, see the text.
}
\label{fig:results}
\end{figure}
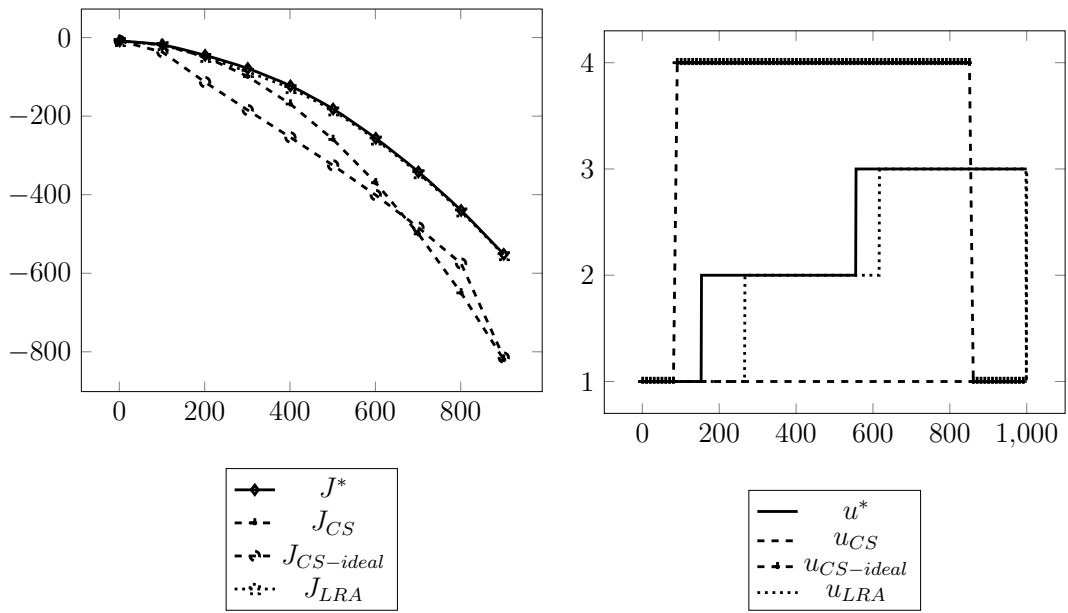
The results are shown in \cref{fig:results}.
The right-hand-side figure shows the policies computed,
while the left-hand-side figure shows their value functions.
Since constraint sampling (CS) produces randomized results,
we repeated the simulations 10 times.
The results in all cases were quite close, hence we show the plot for a typical run. The plots show that the CS case with the ideal sampler is slightly worse, which can be attributed to the fact that the in the case of ideal sampler, the sampling distribution is concentrated near the start state $s'$ in comparison to the behaviour of the distribution $c_{s'}(s)=(1-\alpha)(\alpha)^{|s'-s|}$ which distributes the mass more evenly.
As can be seen from the figure, choosing the constraints to (approximately) satisfy the constraint of the theoretical results
reliably produces better results: In fact, the value functions $J^*$ and $J_{LRA}$ are mostly on the top of each other.
We expect that in larger domains, differences between constraints chosen based on the principles discovered in this paper and choosing constraints in more heuristic ways will lead to similar, or even larger differences. 
However, the study of this is left for future work.

\section{Conclusion}
In this paper, we introduced and analyzed the linearly relaxed approximate linear program (LRALP) whose constraints were obtained as positive linear combination of the original constraints of the ALP.
The main novel contribution is a theoretical result which gives a geometrically interpretable bound on the performance loss due to relaxing the constraint sets. Possibilities for future work include extending the results to other forms of approximate linear programming in MDPs (e.g., \citep{SALP}), exploring the idea of approximating dual variables and
designing algorithms that use the newly derived results to actively compute what constraints to select. \todoc{Other large scale LP}

\printbibliography
\newpage
\onecolumn
\end{document}